\newcommand{\sset}{\mathcal{S}}
\newcommand{\unit}{\mathcal{U}}
\newcommand{\innprod}[2]{\langle #1 , #2 \rangle}
\newcommand{\trans}{T^{\dagger}}
\newcommand{\condprob}[2]{p\, (#1 | #2 )}
\newcommand{\alp}{\mathcal{A}}
\newcommand{\subuniteffsati}{\mathcal{S}_{\{i\}}}
\newtheorem{theorem}{Theorem}
\newtheorem{lemma}{Lemma}
\newtheorem{corollary}{Corollary}
\begin{document}

\title{Reversible Dynamics in Strongly Non Local Boxworld Systems}

\author{Sabri W. Al-Safi} \email{S.W.Al-Safi@damtp.cam.ac.uk}
\affiliation{DAMTP, Centre for Mathematical Sciences, Wilberforce Road, Cambridge CB3 0WA, UK}

\author{Anthony J. Short} \email{tony.short@bristol.ac.uk}
\affiliation{H. H. Wills Physics Laboratory, University of Bristol, Tyndall Avenue, Bristol, BS8 1TL, UK}

\begin{abstract}  
In order to better understand the structure of quantum theory, or speculate about theories that may supercede it, it can be helpful to consider alternative physical theories. ``Boxworld'' describes one such theory, in which all non-signaling correlations are achievable.  In a limited class of multipartite Boxworld systems - wherein all subsystems are identical and all measurements have the same number of outcomes - it has been demonstrated that the set of reversible dynamics is `trivial', generated solely by local relabellings and permutations of subsystems. We develop the convex formalism of Boxworld to give an alternative proof of this result, then extend this proof to all multipartite Boxworld systems, and discuss the potential relevance to other  theories. These results lend further support to the idea that the rich reversible dynamics in quantum theory may be the key to understanding its structure and its informational capabilities.
\end{abstract}

\maketitle

\section{Introduction.}  To gain a better understanding of quantum theory, and to  explore possible future modifications, it can be helpful to view quantum theory from the `outside' -- as one member of a broader class of  physical theories. One such approach is to consider the class of general probabilistic theories \cite{barrett05, hardy01, chiribella10, chiribella11}, which are based on operational notions that allow many different theories to be represented using the same  intuitive mathematical formalism. 

A natural alternative  to quantum theory within this class is \emph{Boxworld} \cite{short&barrett10} (originally called generalised non-signalling theory in \cite{barrett05}). Like quantum theory, this theory admits non-local correlations which cannot be explained by any locally realistic model \cite{bell65}. In fact it admits all non-signaling correlations, including those which maximally violate Bell inequalities, such as the well-known \emph{PR-Box} \cite{pr94}. These super-strong correlations cause Boxworld to differ markedly from the world we observe: for example, any distributed computation could be performed with the transmission of a single bit \cite{vandam05}, and bit-commitment would be possible without using relativistic effects \cite{buhrman06}. 

In attempting to  better understand what makes quantum theory uniquely successful at describing the world, attention has been given to various principles which we expect nature to obey, but which are often violated by other probabilistic theories. A common example of this is \emph{reversibility} (or transitivity) \cite{hardy01, masanes12, muller12, chiribella12, masanes11}, which demands that any two pure states are linked by a reversible transformation. 

In \cite{colbeck10}, Colbeck \emph{et al.} proved that, if one assumes that all subsystems are identical and all measurements have the same number of outcomes, the set of reversible Boxworld multipartite dynamics is generated by local operations and permutations of systems. Such Boxworld systems  cannot experience reversible interactions and thus violate reversibility. In particular, a PR-box (or any other entangled state) could not generated reversibly from an initial product state. 

Other recent results have also highlighted the importance of reversibility. It has been shown that two systems whose state-spaces are $d$-dimensional balls can only interact in a continuous and reversible way if $d=3$ (in which case the systems correspond to qubits in the Bloch-sphere representation) \cite{muller11}. Furthermore, any theory in which  local systems are identical to qubits and in which there exists at least one continuous reversible interaction must globally be identical to quantum theory \cite{masanes12}. Generalisations of such results may have great significance in explaining why our world looks quantum, or in finding theories which may potentially supersede quantum theory.

 A common feature of the reversibility results cited above \cite{colbeck10, masanes12, muller11} is that they consider interactions between systems which are locally identical. However, the reversibility of quantum theory carries over to  the case where, for example, the systems have differing Hilbert space dimension. Could there exist non-trivial reversible interactions between different types of system in these other general probabilistic theories? In this paper we provide an alternative proof of Colbeck \emph{et al.}'s result and then extend it to apply to any combination of Boxworld systems, including different types (so long as none of the systems are classical).

The structure of the paper is as follows: in \S 2 we outline the convex formalism of Boxworld and introduce some extra terminology useful for our exposition. In \S 3 we prove some results about the structure of this convex set, which we show in \S 4 are sufficient to recover the result of \cite{colbeck10}. We then extend this proof to the general case of non-identical systems. 

\section{ Set-up and notation. }
In order to compare quantum theory to alternative theories, such as Boxworld, it is helpful to define a mathematical framework which is broad enough to describe any such theory. Here we consider an operational framework for defining general probabilistic theories \cite{barrett05, hardy01} in which the \emph{state} of a system is specified by the probabilities it assigns to   \emph{effects}, or measurement outcomes. In particular, for each system we will  assume there exists a  finite set of \emph{fiducial} measurements, which are sufficient to deduce the outcome probabilities of any other measurement (e.g. for a qubit we could take measurements of the Pauli operators $\sigma_x$, $\sigma_y$ and $\sigma_z$ as the fiducial set). 

Consider a joint system composed of $N$ individual systems. One possible global measurement on the joint system involves a local fiducial measurement being performed on each system. The value $\condprob{a_1 , \ldots , a_N}{x_1 , \ldots , x_N}$ then denotes the probability of outcomes $a_1, \ldots , a_N$ occurring at systems $1, \ldots , N$ respectively, given that the measurement choices $x_1, \ldots , x_N$ were made on those systems. We assume that a complete specification of the values $\condprob{a_1 , \ldots , a_N}{x_1 , \ldots , x_N}$ is sufficient to determine the state of the joint system (an assumption commonly known as \emph{local tomography} \cite{hardy01, barrett05}). These values obey the normal laws of probability:
\begin{equation}\label{statepositivity}
0 \leq \condprob{a_1 , \ldots , a_N}{x_1 , \ldots , x_N} \leq 1 ,
\end{equation}
and for any fixed choice of the $x_i$: 
\begin{equation}\label{statenormalization}
\sum_{a_1, \ldots , a_N} \condprob{a_1 , \ldots , a_N}{x_1 , \ldots , x_N} = 1.
\end{equation}

To ensure that information cannot be sent between the systems (since, for example, they may be spatially separated), we demand also that the \emph{no-signaling} condition is satisfied by the joint distribution $p$. This condition says that the outcome statistics for any subset $\Omega$ of the $N$ systems must not be affected by measurement choices made on systems not in $\Omega$, i.e. there is no way for one system to signal information to any other systems. In mathematical terms, we demand that for all $\Omega \subset [N]$, the \emph{marginal distribution}
\begin{equation}\label{statenosignaling}
\sum_{a_i : i\notin \Omega} \condprob{a_1 , \ldots , a_N}{x_1 , \ldots , x_N}
\end{equation}
is well-defined, independent of the value of $x_i$ for systems $i \notin\Omega$.

 \emph{Boxworld} consists of all multipartite states whose joint outcome distributions for the fiducial measurements obey \eqref{statepositivity}, \eqref{statenormalization}, and \eqref{statenosignaling}.  The allowed measurements and transformations in Boxworld are all those which are well-defined within the operational framework.

Instead of considering the probability distribution $\condprob{a_1 , \ldots , a_N}{x_1 , \ldots , x_N}$ directly, it is often convenient to represent general probabilistic theories using real vector spaces, in which states $s$ and effects $e$ are specified by vectors such that $\innprod{e}{s}$ equals the probability of effect $e$ occuring for a system in state $s$. Let there be $M^{(i)}$ measurement choices on system $i$, and $K_{j}^{(i)}$ outcomes for the $j$th measurement on system $i$. When $M^{(i)} = 1$, a single probability distribution is sufficient to describe the state of the system, and we say the system is classical  (our results apply exclusively to the case where all systems are non-classical). The effect vectors for system $i$ may be constructed as follows: pick a linearly independent set of vectors $\{\unit^{(i)}, X^{(i)}_{a_i|x_i}\} \subset \mathbb{R}^d$ for $1\leq x_i \leq M^{(i)}$ and $1\leq a_i \leq K^{(i)}_{x_i}-1$, where $d=1+\sum_{x_i=1}^{M^{(i)}} (K^{(i)}_{x_i}-1)$. $\unit^{(i)}$ represents the \emph{unit effect}: the unique effect for which any allowed (normalized) state gives probability $1$. $X^{(i)}_{a_i|x_i}$ is the fiducial effect corresponding to measuring $x_i$ and obtaining outcome $a_i$. The remaining fiducial effect vectors are defined $X^{(i)}_{K^{(i)}_{x_i}|x_i} = \unit^{(i)} - \sum_{a_i=1}^{K^{(i)}_{x_i}-1} X^{(i)}_{a_i|x_i}$.

It turns out that the tensor product of the vector spaces characterizing each individual system provides a neat representation of states and effects in joint Boxworld systems \cite{barrett05}. Suppose system $i$ is represented by a real vector space $\mathbb{R}^{d_i}$ as above, and let $\mathbb{R}_N = \mathbb{R}^{d_1}\otimes \cdots \otimes \mathbb{R}^{d_N}$. The N-partite fiducial effects are defined to be the vectors of the form $X_{a_1|x_1}^{(1)}\otimes \cdots \otimes X_{a_N|x_N}^{(N)}$, where $X_{a_i|x_i}^{(i)}$ is a fiducial effect on system $i$. The $N$-partite unit effect is defined by $\unit = \unit^{(1)}\otimes \cdots \otimes \unit^{(N)}$.

Any Boxworld state whose measurement statistics obey \eqref{statepositivity}, \eqref{statenormalization}, and \eqref{statenosignaling} corresponds to a unique vector $s \in \mathbb{R}_N$, such that $\innprod{\unit}{s}=1$ (i.e. the state is normalized) and $p(a_1, \ldots , a_N | x_1 , \ldots , x_N) = \innprod{X_{a_1|x_1}^{(1)}\otimes \cdots \otimes X_{a_N|x_N}^{(N)}}{s}$ \cite{colbeck10}. Let the set of allowed state vectors be denoted by $\sset \subset \mathbb{R}_N$. \emph{Pure product states} are those of the form $s^{(1)}\otimes \cdots \otimes s^{(N)}$ where $s^{(i)}$ deterministically assigns to each fiducial measurement $x_i$ on system $i$, a definite outcome $1 \leq s^{(i)}_{x_i} \leq K^{(i)}_{x_i}$, i.e. $\innprod{X_{a_i|x_i}^{(i)}}{s^{(i)}} = 1$ iff $s^{(i)}_{x_i} = a_i$.

The allowed $N$-partite effects in Boxworld are all vectors $e \in \mathbb{R}_N$ such that  $\innprod{e}{s} \in [0,1]$ for all $s \in \sset$. If $\innprod{e}{s} = 1$, we will say that the state $s$ \emph{hits} the effect $e$. We will be particularly interested in effects of the form $E = \sum_{\alpha} e_{\alpha}$, where each $e_{\alpha}$ is a fiducial effect.  We say in this case that $\{e_{\alpha}\}$ forms a \emph{decomposition} of $E$, or $E$ admits the decomposition $\{e_{\alpha}\}$. We will tend to use lowercase letters for fiducial effects, and uppercase for sums of extreme ray effects.

An effect $E$ is \emph{multiform} if it can be written $E=\sum_{\alpha} e_{\alpha} = \sum_{\beta} f_{\beta}$ where $\{e_{\alpha}\}$ and $\{f_{\beta}\}$ are distinct sets of fiducial effects. Effects of the form $X_{a_1|x_1}^{(1)}\otimes \cdots \otimes \unit^{(i)} \otimes \cdots \otimes X_{a_N|x_N}^{(N)}$, where exactly one component of the tensor product is the unit effect, and the remainder are fiducial effects, are said to be \emph{sub-unit} effects, or an $i$-\emph{sub-unit} effect if the $i$th component is the unit effect. For each $x_i$, $\unit^{(i)}$ has a distinct decomposition $\sum_{a_i=1}^{K^{(i)}_{x_i}} X^{(i)}_{a_i|x_i}$, hence sub-unit effects are trivially multiform (as we assume all systems are non-classical and hence have at least two fiducial measurements).

Fiducial and sub-unit effects are tensor products of vectors, so it makes sense to refer to their $i$th component, e.g. $E^{(i)} = X_{a_i|x_i}^{(i)}$. For a subset $\Omega \subseteq [N]$ we will write $E^{\Omega} = \bigotimes_{i\in \Omega} E^{(i)}$, e.g. $E^{\{1,3\}} = X_{a_1|x_1}^{(1)}\otimes X_{a_3|x_3}^{(3)}$.

Finally, we say that a set of fiducial  effects $\{e_{\alpha}\}_{\alpha \in A} $ (strictly) \emph{covers} the effect $E$ if there is some (strict) subset $B\subset A$ such that $\sum_{\alpha \in B} e_{\alpha} = E$.

%A pure product state $s$ \emph{hits} an effect $E$ if $\innprod{E}{s} = 1$. $s$ hits $X_{a_1|x_1}^{(1)}\otimes \cdots \otimes X_{a_N|x_N}^{(N)}$ iff $s^{(i)}_{x_i} = a_i$ for all $i$. $s$ hits $E = \sum_{\alpha} e_{\alpha}$  iff $s$ hits exactly one $e_{\alpha}$.\\

\section{ Decompositions.}  We now prove some results concerning multiform effects. Given that none of the systems are classical, the simplest multiform effects are the sub-unit effects, which have various decompositions according to the different measurement choices on the system whose component is the unit effect. The following Lemma shows that these are the only possible decompositions of a sub-unit effect. 

\begin{lemma} \label{redefflemma}
Let $E = \sum_{\alpha} e_{\alpha}$ be an $i$-sub-unit effect. Then each fiducial effect $e_{\alpha}$ satisfies $e_{\alpha}^{(j)} = E^{(j)}$ for all components $j \neq i$. Moreover, the set of $i$th components $\{e^{(i)}_{\alpha}\}$ forms a fiducial measurement on system $i$.
\end{lemma}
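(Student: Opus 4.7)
The plan is to test the identity $E = \sum_\alpha e_\alpha$ against product pure states $s$. Since each $e_\alpha$ and $E$ is a tensor whose components are fiducial effects (or, in slot $i$ of $E$, the unit effect), the inner product $\innprod{F}{s}$ factors as a product of terms each lying in $\{0,1\}$. Hence $\innprod{E}{s}, \innprod{e_\alpha}{s} \in \{0,1\}$, so whenever $\innprod{E}{s}=1$ exactly one $e_\alpha$ is hit by $s$, and whenever $\innprod{E}{s}=0$ none is. Writing $E^{(k)}=X^{(k)}_{a_k|x_k}$ for $k \neq i$, we also have $\innprod{E}{s}=1$ if and only if $s^{(k)}_{x_k}=a_k$ for every $k \neq i$.

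For the first assertion, I fix $\alpha$, write $e_\alpha^{(k)} = X^{(k)}_{b_k|y_k}$, and consider any product pure state $s$ hitting $e_\alpha$: such $s$ must satisfy $s^{(k)}_{y_k}=b_k$ for every $k$, while the remaining outcome assignments are free. Hitting $e_\alpha$ forces $\innprod{E}{s}=1$, whence $s^{(k)}_{x_k}=a_k$ for all $k\neq i$. Now fix $j \neq i$. If $y_j=x_j$, the two constraints on $s^{(j)}_{x_j}$ directly give $b_j=a_j$. If instead $y_j \neq x_j$, I can perturb $s^{(j)}_{x_j}$ to some value other than $a_j$ (possible since each measurement has at least two outcomes) without changing any factor of $\innprod{e_\alpha}{s}$; the perturbed state still hits $e_\alpha$, contradicting the required value. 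Either way $e_\alpha^{(j)} = E^{(j)}$.

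With every off-$i$ component pinned, the identity factors as $\bigl(\bigotimes_{j\neq i} E^{(j)}\bigr) \otimes \bigl(\sum_\alpha e_\alpha^{(i)}\bigr) = \bigl(\bigotimes_{j\neq i} E^{(j)}\bigr) \otimes \unit^{(i)}$, forcing $\sum_\alpha e_\alpha^{(i)} = \unit^{(i)}$ since the common tensor factor is nonzero. To see that $\{e_\alpha^{(i)}\}$ is itself a fiducial measurement, I let $n_{a|x}$ count the multiplicity of $X^{(i)}_{a|x}$ among the $e_\alpha^{(i)}$ and substitute $X^{(i)}_{K_x|x} = \unit^{(i)} - \sum_{a<K_x} X^{(i)}_{a|x}$. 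Linear independence of $\{\unit^{(i)}\} \cup \{X^{(i)}_{a|x} : 1 \leq x \leq M^{(i)},\, 1 \leq a < K_x\}$ then yields $\sum_x n_{K_x|x} = 1$ and $n_{a|x}=n_{K_x|x}$ for each $a<K_x$. Since the multiplicities are non-negative integers, exactly one measurement $x_0$ contributes, and each of its outcomes appears exactly once, giving $\{e_\alpha^{(i)}\} = \{X^{(i)}_{a|x_0} : 1 \leq a \leq K^{(i)}_{x_0}\}$.

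The hardest step, I expect, is the perturbation argument in the second paragraph: it relies both on the free independent choice of each measurement outcome in a product pure state and on the standing assumption that every fiducial measurement admits at least two outcomes. Everything else is tensor bookkeeping together with a clean single-system linear-algebra calculation.
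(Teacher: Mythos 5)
Your proof is correct. The first half is essentially the paper's argument in contrapositive clothing: the paper constructs a pure product state whose $j$th component hits $e_{\alpha'}^{(j)}$ but not $E^{(j)}$ and derives the contradiction $\innprod{e_{\alpha'}}{s}=1 > 0 = \innprod{E}{s}$, whereas you start from a state hitting $e_\alpha$ and perturb the outcome assigned to measurement $x_j$; both hinge on the same implicit facts (outcomes of distinct measurements are assigned independently in a pure product state, and each measurement has at least two outcomes). Where you genuinely diverge is the second claim. The paper again argues with states: any pure state on system $i$ hits exactly one $e_\alpha^{(i)}$, effects from different measurements can be hit simultaneously, and an incomplete outcome set can be missed entirely. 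You instead factor out the common tensor component to get $\sum_\alpha e_\alpha^{(i)} = \unit^{(i)}$ and then run a multiplicity count against the linearly independent basis $\{\unit^{(i)}\}\cup\{X^{(i)}_{a|x}: a<K_x\}$. Your route is purely linear-algebraic, and it has the minor advantage of handling repeated effects in the multiset $\{e_\alpha^{(i)}\}$ explicitly (the paper's ``hits exactly one'' phrasing deals with this only implicitly); the paper's route stays closer to the operational picture used throughout the rest of its appendices. Both are valid.
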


\begin{proof} See appendix \ref{app_lem1} \end{proof} 

\begin{corollary} \label{redeffcor}
Let $E=\sum_{\alpha=1}^r e_{\alpha}$ be a sub-unit effect. Then $\{e_{\alpha}\}$ does not strictly cover a multiform effect. 
\end{corollary}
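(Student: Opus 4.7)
The plan is to argue by contradiction, using Lemma~\ref{redefflemma} twice: once on the original decomposition, and once on a new decomposition of $E$ obtained by substituting the hypothesised alternative decomposition of the covered effect. Concretely, suppose for contradiction that some strict subset $B \subsetneq [r]$ yields $F = \sum_{\alpha \in B} e_\alpha$ which is multiform, with a second decomposition $F = \sum_\beta f_\beta$ satisfying $\{f_\beta\} \neq \{e_\alpha\}_{\alpha \in B}$. Since $B \subsetneq [r]$, the complementary set $[r] \setminus B$ is non-empty.

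Substituting, I obtain
\begin{equation*}
 E \;=\; \sum_{\beta} f_\beta \;+\; \sum_{\alpha \in [r]\setminus B} e_\alpha,
\end{equation*}
which is a new decomposition of the sub-unit effect $E$. Applying Lemma~\ref{redefflemma} to both this decomposition and the original, every $f_\beta$ and every $e_\alpha$ must agree with $E$ in all components $j \neq i$, and both $\{e_\alpha^{(i)}\}_{\alpha \in [r]}$ and $\{f_\beta^{(i)}\} \cup \{e_\alpha^{(i)}\}_{\alpha \in [r]\setminus B}$ must form fiducial measurements on system~$i$. These two fiducial measurements share the non-empty common set of outcomes $\{e_\alpha^{(i)}\}_{\alpha \in [r]\setminus B}$.

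The next step is to deduce that the two measurements must therefore be identical. The fiducial effect vectors $X_{a|x}^{(i)}$ for $a < K_x^{(i)}$ and the unit effect $\unit^{(i)}$ were chosen as a linearly independent set across all choices of $x$, and the completing effects $X_{K_x|x}^{(i)} = \unit^{(i)} - \sum_{a<K_x} X_{a|x}^{(i)}$ are determined combinations of these. A short linear-independence argument then shows that no fiducial effect belongs to two different fiducial measurements, so any two distinct fiducial measurements on system~$i$ have disjoint outcome vectors. Since the two measurements above share outcomes, they coincide; hence $\{f_\beta^{(i)}\} = \{e_\alpha^{(i)}\}_{\alpha \in B}$ as sets.

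Finally, since every $f_\beta$ and every $e_\alpha$ is determined by its $i$th component (all other components equalling $E^{(j)}$), the sets $\{f_\beta\}$ and $\{e_\alpha\}_{\alpha \in B}$ coincide, contradicting multiformity of $F$. The main obstacle is the disjointness-of-measurements step: this is a small standalone fact about the chosen fiducial basis rather than a deep argument, but it is the place where the non-classical assumption and the explicit linear-independence construction of Section~2 really enter; the rest of the proof is a direct double application of Lemma~\ref{redefflemma}.
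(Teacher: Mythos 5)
Your proof is correct and follows essentially the same route as the paper's: substitute the alternative decomposition of the covered effect into the decomposition of $E$, then invoke Lemma~\ref{redefflemma} to force the two decompositions of $E$ to coincide. The only difference is that you spell out explicitly (via linear independence of the fiducial basis) why a shared $i$th component forces the two fiducial measurements to be identical, a fact the paper compresses into the remark that there is a unique decomposition of $E$ containing $e_r$.
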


\begin{proof}Suppose without loss of generality  that for $s<r$ we have $\sum_{\alpha=1}^{s} e_{\alpha}= \sum_{\beta=1}^{t} f_{\beta}$. Then $\{f_1, \ldots , f_t, e_{s+1}, \ldots e_r\}$ is a decomposition of $E$ containing $e_r$. However, it follows from Lemma \ref{redefflemma} that there is a unique decomposition of $E$ containing $e_r$, hence $\{e_{\alpha}\}_{{\alpha}=1}^{s} = \{f_{\beta}\}_{{\beta}=1}^{t}$.
\end{proof}

For convenience we will assume from here on that the systems are arranged in order of increasing numbers of measurement outcomes, i.e. $K_{j}^{(i)} \leq K_{j+1}^{(i)}$ and $K_{1}^{(i)} \leq K_{1}^{(i+1)}$; this amounts to no more than a relabelling of systems and measurement choices. $K^{(1)}_1$ is therefore the smallest number of outcomes possible for any fiducial measurement. The following lemma restricts the type of effects which can have small decompositions.

\begin{lemma} \label{mindecomplemma}
For $r\leq K^{(1)}_1$ suppose that $\{e_{\alpha}\}_{\alpha=1}^{r}$does not cover any sub-unit effects. Then for any fiducial effect $f \notin \{e_{\alpha}\}$, there is a pure product state which hits $f$ but none of the $e_{\alpha}$.
\end{lemma}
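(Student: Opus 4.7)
The plan is to construct the desired pure product state $s = s^{(1)} \otimes \cdots \otimes s^{(N)}$ by setting $s^{(i)}_{x_i} = a_i$, where $f = \bigotimes_i X^{(i)}_{a_i|x_i}$, so that $s$ automatically hits $f$, and then choosing the remaining free coordinates $s^{(i)}_y \in \{1, \dots, K^{(i)}_y\}$ for $y \neq x_i$ so as to miss every $e_\alpha$. Observe first that we only need to worry about those $e_\alpha$ which are \emph{dangerous} for $f$, in the sense that $x^\alpha_i = x_i$ implies $a^\alpha_i = a_i$ for every $i$: a non-dangerous $e_\alpha$ has some $i$ at which $x^\alpha_i = x_i$ and $a^\alpha_i \neq a_i$, so any state hitting $f$ automatically misses it there. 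For each dangerous $e_\alpha$, set $S_\alpha = \{i : x^\alpha_i \neq x_i\}$; since $e_\alpha \neq f$, $S_\alpha$ is non-empty, and $s$ hits $e_\alpha$ if and only if $s^{(i)}_{x^\alpha_i} = a^\alpha_i$ for every $i \in S_\alpha$.

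The main step is a probabilistic argument: choose each free coordinate $s^{(i)}_y$ uniformly and independently. The expected number $E[H]$ of dangerous $e_\alpha$ hit by $s$ then satisfies
\[
E[H] = \sum_{\alpha \text{ dangerous}} \prod_{i \in S_\alpha} \frac{1}{K^{(i)}_{x^\alpha_i}} \;\leq\; \frac{r}{K^{(1)}_1} \;\leq\; 1,
\]
using $|S_\alpha| \geq 1$ and $K^{(i)}_{x^\alpha_i} \geq K^{(1)}_1$. Whenever this bound is strict, $E[H] < 1$ yields $P(H = 0) > 0$, and some realisation of $s$ is the required state.

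The delicate case, where the main work lies, is the equality $E[H] = 1$; this is where the non-covering hypothesis becomes essential. Equality forces $r = K^{(1)}_1$, every $e_\alpha$ to be dangerous, every $|S_\alpha| = 1$ (write $i_\alpha$ for its unique element and $y_\alpha = x^\alpha_{i_\alpha}$), and $K^{(i_\alpha)}_{y_\alpha} = K^{(1)}_1$. Each such $e_\alpha$ therefore agrees with $f$ on every position $j \neq i_\alpha$. Group the $e_\alpha$ by the pair $(i_\alpha, y_\alpha)$ into buckets whose sizes sum to $r$, and let $V_{i,y}$ denote the set of values $a^\alpha_{i_\alpha}$ appearing in the $(i,y)$-bucket. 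If some $|V_{i,y}| = K^{(i)}_y$, then that bucket already contains $K^{(i)}_y = r$ distinct elements and so exhausts the whole collection $\{e_\alpha\}$; these $r$ fiducial effects have the form $\bigotimes_{j \neq i} X^{(j)}_{a_j|x_j} \otimes X^{(i)}_{c|y}$ for $c = 1, \dots, K^{(i)}_y$, and therefore sum to the $i$-sub-unit effect $\bigotimes_{j \neq i} X^{(j)}_{a_j|x_j} \otimes \unit^{(i)}$, contradicting the non-covering hypothesis. Hence $|V_{i,y}| < K^{(i)}_y$ for every $(i,y)$, and choosing $s^{(i)}_y \notin V_{i,y}$ (arbitrarily elsewhere) yields a pure product state hitting $f$ but no $e_\alpha$.
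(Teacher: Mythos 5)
Your proof is correct, and it takes a genuinely different route from the paper's. The paper proves this lemma by induction on the number of systems $N$: it fixes $s^{(1)}_{x_1}=a_1$, splits into cases according to whether the first components $\{e^{(1)}_{\alpha}\}$ fill some measurement $x'\neq x_1$ on system $1$, and in each case reduces to the same statement on systems $2,\ldots,N$ for a conditioned sub-collection $\{e^{\{2,\ldots,N\}}_{\alpha} : e^{(1)}_{\alpha}=g\}$, using the ordering $K^{(1)}_1\leq K^{(2)}_1$ to keep the induction hypothesis applicable. Your argument is instead global and non-inductive: a union bound over the at most $r$ ``dangerous'' effects shows that a uniformly random completion of the free coordinates misses all of them with positive probability unless every inequality is tight, and tightness forces each $e_{\alpha}$ to differ from $f$ in exactly one coordinate $(i_{\alpha},y_{\alpha})$ with $K^{(i_{\alpha})}_{y_{\alpha}}=r=K^{(1)}_1$; the non-covering hypothesis is then precisely what prevents a single bucket from exhausting a full measurement and hence summing to a sub-unit effect. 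Both proofs exploit $r\leq K^{(1)}_1$ in the same essential way (too few effects to block every outcome of any measurement), but yours isolates the unique extremal configuration where the counting is tight and identifies it directly with a sub-unit decomposition, which is arguably cleaner than the two-case induction. One step worth making explicit: the deduction that $|S_{\alpha}|=1$ in the equality case uses $K^{(1)}_1\geq 2$, which holds because every fiducial measurement is (implicitly) assumed to have at least two outcomes; and the choice $s^{(i)}_{y}\notin V_{i,y}$ never conflicts with $s^{(i)}_{x_i}=a_i$ since every bucket coordinate satisfies $y\neq x_i$ by definition of $S_{\alpha}$.
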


\begin{proof} See appendix \ref{app_lem2} \end{proof} 

\begin{corollary}\label{mindecompcor}
The only multiform effects which have a decomposition with exactly $K^{(1)}_1$ elements are sub-unit effects.
\end{corollary}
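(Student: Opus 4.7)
The plan is to combine the two lemmas just proved: Lemma \ref{mindecomplemma} will convert the hypothesis of multiformity into the statement that $\{e_\alpha\}$ covers a sub-unit effect, and Lemma \ref{redefflemma} will then force that covered sub-unit effect to be all of $E$. I would proceed by contradiction, starting with a multiform effect $E = \sum_{\alpha=1}^{K^{(1)}_1} e_\alpha$, which by definition admits a second decomposition $E = \sum_\beta f_\beta$ containing at least one fiducial effect $f_{\beta_0} \notin \{e_\alpha\}$.

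\textbf{Step 1.} Suppose for contradiction that $\{e_\alpha\}$ covers no sub-unit effect. Since $r = K^{(1)}_1$, Lemma \ref{mindecomplemma} supplies a pure product state $s$ that hits $f_{\beta_0}$ but none of the $e_\alpha$. Any pure product state takes only the values $0$ and $1$ on fiducial effects (the $i$th factor selects a definite outcome for every measurement on system $i$), so $\innprod{e_\alpha}{s}=0$ for each $\alpha$, giving $\innprod{E}{s}=0$. On the other hand, $\innprod{f_{\beta_0}}{s}=1$ and $\innprod{f_\beta}{s}\geq 0$ for all other $\beta$, so $\innprod{E}{s}\geq 1$, a contradiction. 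Hence $\{e_\alpha\}$ must cover some sub-unit effect $E' = \sum_{\alpha\in B} e_\alpha$ for a nonempty $B \subseteq \{1,\ldots,K^{(1)}_1\}$.

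\textbf{Step 2.} To identify $E'$ with $E$, I invoke Lemma \ref{redefflemma}: if $E'$ is $i$-sub-unit, then every fiducial decomposition of $E'$ arises from a single fiducial measurement on system $i$, and therefore has exactly $K^{(i)}_{x_i}$ elements for some measurement choice $x_i$ on system $i$. The ordering convention $K^{(1)}_1 \leq K^{(i)}_{x_i}$ yields $|B| = K^{(i)}_{x_i} \geq K^{(1)}_1$, and combined with $B \subseteq \{1,\ldots,K^{(1)}_1\}$ this forces $B = \{1,\ldots,K^{(1)}_1\}$. Hence $E = E'$ is a sub-unit effect, as required.

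The main obstacle, such as it is, lies in recognising Lemma \ref{mindecomplemma} as exactly the tool that converts ``multiform'' into ``covers a sub-unit effect''; the separating-state argument is then a one-liner because fiducial effects evaluate to $0$ or $1$ on pure product states. Everything downstream is a size count that exploits the minimality convention on $K^{(1)}_1$.
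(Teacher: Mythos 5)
Your proposal is correct and follows essentially the same route as the paper: Lemma \ref{mindecomplemma} (via the separating pure product state) forces $\{e_\alpha\}$ to cover a sub-unit effect, and Lemma \ref{redefflemma} plus the minimality of $K^{(1)}_1$ forces that covered effect to exhaust the whole decomposition, so $E$ itself is sub-unit. Your Step 2 merely spells out the size count that the paper compresses into ``every decomposition of a sub-unit effect has at least $K^{(1)}_1$ elements.''
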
 
\begin{proof}
Suppose $E=\sum_{\alpha=1}^{r} e_{\alpha} = \sum_{\beta=1}^{s} f_{\beta}$ are distinct decompositions, with $r = K^{(1)}_1$, and suppose without loss of generality that $f_1 \notin \{e_{\alpha}\}_{{\alpha}=1}^r$. Every pure product state which hits $f_1$ must also hit one of the $e_{\alpha}$, so it follows from Lemma \ref{mindecomplemma} that  $\{e_{\alpha}\}$ covers a sub-unit effect. By Lemma \ref{redefflemma}, every decomposition of a sub-unit effect has at least $K^{(1)}_1$ elements, hence $E$ is itself a sub-unit effect.
\end{proof}

\section{Dynamics } As well as states and measurements, a physical theory must have some notion of dynamics which transform the state of a system. We call $T$ an allowed transformation if it is a convex-linear mapping of $\sset$ to itself. i.e. for $s_1, s_2 \in  \sset$ and $0 \leq p \leq 1$, $T(p s_1 + (1-p) s_2) = pT(s_1) + (1-p)T(s_2)$. Any map $T$ satisfying this condition can be extended to a linear map on $\mathbb{R}_N$ \cite{barrett05}, so we will assume that $T$ is linear.

Operationally, a transformation is determined by how it affects outcome probabilities. Since $\innprod{e}{T(s)} = \innprod{T^{\dagger}(e)}{s}$, any transformation $T$ may equivalently be defined by how its adjoint linear map $T^{\dagger}$ acts on effects. From here on we will use the latter perspective.

$\trans$ is \emph{reversible} if there exists an allowed transformation $S$ such that $S^{\dagger}\trans(e) = e$ for all effects $e$. We will assume this is the case from here on and write $S=T^{-1}$. An interesting property of box-world is that all effects lie in the convex cone whose extreme rays are the fiducial effects. It can easily be checked that if $\trans$ is reversible, it must map this cone onto itself and hence map fiducial effects to fiducial effects. Moreover if $E=\sum_{\alpha} e_{\alpha}$ then $\trans(E) = \sum_{\alpha} \trans(e_{\alpha})$. Clearly, $\trans$ also maps multiform effects to multiform effects. 

\subsection{Recovering the identical systems case.} Using the results of \S 3, it is relatively easy to recover the main result of \cite{colbeck10}: that all reversible transformations on Boxworld are trivial as long as all the systems are identical. In fact, we only require that $K^{(i)}_1 = K^{(1)}_1$ for all systems $i$. 

The action of reversible transformations on sub-unit effects forms a crucial step in the proof. Let $E$ be a $i$-sub-unit effect for some $i$, and note that $E$ has a decomposition with exactly $r=K^{(1)}_1$ elements (according to a measurement on system $i$ with $r$ outcomes). It follows that $\trans(E)$ is a multiform effect with at least one decomposition with exactly $K^{(1)}_1$ elements, hence by Corollary \ref{mindecompcor}, $\trans(E)$ is a sub-unit effect.

Since reversible transformations permute the set of sub-unit effects, they preserve the property that a pair of fiducial effects differs in only one component: if $e_1$ and $e_2$ differ in one component, then they each belong to some decomposition of a sub-unit effect $E$.  It follows that $\trans(e_1)$ and $\trans(e_2)$ each belong to some decomposition of the sub-unit effect $\trans(E)$, hence by Lemma \ref{redefflemma} differ in only one component.

Let $\alp_i$ be the set $\{X^{(i)}_{a|x}\}$ of fiducial effects on system $i$. By viewing the fiducial effect $X_{a_1|x_1}^{(1)}\otimes \cdots \otimes X_{a_N|x_N}^{(N)}$ as a string $\textbf{X} = (X_{a_1|x_1}^{(1)}, \ldots , X_{a_N|x_N}^{(N)}) $, $\trans$ can be interpreted as a permutation of the set $\alp_1 \times \cdots \times \alp_N$. The \emph{Hamming distance} between two strings $\textbf{X}_1$, $\textbf{X}_2$ is the number of components in which they differ, i.e. $\# \{ i : \textbf{X}_1^{(i)} \neq \textbf{X}_2^{(i)}\}$. By the above comments, $\trans$ preserves a Hamming distance of 1 between pairs of strings. The following lemmas and resultant theorem are proved in \cite{colbeck10}. Importantly, the proofs do not assume identical systems or equal numbers of outcomes for local fiducial measurements.

\begin{lemma}[Proved in \cite{colbeck10}]\label{hamdistlemma}
Let $\alp_1,\ldots,\alp_N$ be finite alphabets, and $Q$ be a bijective map from $\alp_1\times\ldots\times \alp_N$ to itself. If $Q$ preserves a Hamming distance of 1 between pairs of strings, then it is a composition of operations which permute components, followed by local permutations acting on individual components.
\end{lemma}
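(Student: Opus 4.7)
The plan is to exploit the combinatorial structure of the Hamming graph $G$ on $\alp_1 \times \cdots \times \alp_N$: identify its maximal cliques as ``lines'', show that $Q$ induces a well-defined global permutation of line directions, and then use a rectangle argument to factor the remainder of $Q$ into local bijections.

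First I would identify the maximal cliques. If $\textbf{X}_1, \textbf{X}_2, \textbf{X}_3$ are mutually at Hamming distance $1$ and $\textbf{X}_2$ differs from $\textbf{X}_1$ only in coordinate $i$, then $\textbf{X}_3$ must also differ from $\textbf{X}_1$ only in coordinate $i$, else $\textbf{X}_2$ and $\textbf{X}_3$ would disagree in two coordinates. Hence any clique of size $\geq 2$ lies in the ``$i$-line'' $L_i(\textbf{X}) = \{\textbf{Y} : \textbf{Y}^{(k)} = \textbf{X}^{(k)} \text{ for all } k \neq i\}$, and the $L_i(\textbf{X})$ are precisely the maximal cliques. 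Since $Q$ is a bijection preserving $G$, it permutes maximal cliques, so $Q(L_i(\textbf{X}))$ is a line through $Q(\textbf{X})$ in some direction $\pi_{\textbf{X}}(i)$. Distinct lines $L_i(\textbf{X}), L_j(\textbf{X})$ meet only at $\textbf{X}$, so their images meet only at $Q(\textbf{X})$; this forces $\pi_{\textbf{X}}(i) \neq \pi_{\textbf{X}}(j)$ whenever $i \neq j$, so each $\pi_{\textbf{X}}$ is a permutation of $[N]$.

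The key step is to show that $\pi_{\textbf{X}}$ does not depend on $\textbf{X}$. Given adjacent $\textbf{X}, \textbf{Y}$ differing only in coordinate $i$, they share the line $L_i(\textbf{X}) = L_i(\textbf{Y})$, giving $\pi_{\textbf{X}}(i) = \pi_{\textbf{Y}}(i)$. For any $j \neq i$, pick $\textbf{Z} \in L_j(\textbf{X}) \setminus \{\textbf{X}\}$ and define $\textbf{W}$ by $\textbf{W}^{(k)} = \textbf{Y}^{(k)}$ for $k \neq j$ and $\textbf{W}^{(j)} = \textbf{Z}^{(j)}$, so that $\textbf{X}, \textbf{Y}, \textbf{W}, \textbf{Z}$ form a $4$-cycle in $G$ with consecutive edges in directions $i, j, i, j$. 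A direct classification shows that every $4$-cycle in $G$ either lies in a single line or is a rectangle in exactly two distinct directions; the former case would force $\pi_{\textbf{X}}(i) = \pi_{\textbf{X}}(j)$, contradicting that $\pi_{\textbf{X}}$ is a permutation. Matching opposite edges of the image rectangle, combined with the shared-line equalities $\pi_{\textbf{X}}(j) = \pi_{\textbf{Z}}(j)$ and $\pi_{\textbf{Y}}(j) = \pi_{\textbf{Z}}(j)$, yields $\pi_{\textbf{X}}(j) = \pi_{\textbf{Y}}(j)$. Since $G$ is connected, $\pi_{\textbf{X}}$ is globally constant, giving a single permutation $\pi$; as $Q$ preserves line sizes, $|\alp_{\pi(i)}| = |\alp_i|$, so $\pi$ corresponds to a legitimate coordinate-permutation $\sigma$ of the product.

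Composing $Q$ with $\sigma^{-1}$ reduces to the case $\pi = \mathrm{id}$, where $Q$ maps each $i$-line to an $i$-line. Fix a base-point $\textbf{X}_0$ with image $\textbf{Y}_0$, and let $f_i : \alp_i \to \alp_i$ be the bijection induced by $Q$ on $L_i(\textbf{X}_0)$. A further rectangle argument, applied in each pair of directions $(m, k)$, shows that the $k$-th coordinate of $Q(\textbf{X})$ is independent of $\textbf{X}^{(m)}$ for $m \neq k$: comparing two values of $\textbf{X}^{(m)}$ produces a rectangle in directions $m, k$ whose image must again be a rectangle in the same directions, forcing equality of the $k$-coordinates on opposite edges. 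Iterating yields $Q(\textbf{X}) = (f_1(\textbf{X}^{(1)}), \ldots, f_N(\textbf{X}^{(N)}))$. The main obstacle is the $4$-cycle classification, which is the engine driving both the constancy of $\pi_{\textbf{X}}$ and the decoupling of $Q$ into independent coordinatewise actions.
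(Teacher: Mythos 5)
The paper does not actually prove this lemma --- it is imported verbatim from Gross \emph{et al.}\ \cite{colbeck10} and used as a black box --- so there is no in-paper argument to compare against. Judged on its own, your proof is essentially correct, and its route (maximal cliques of the Hamming graph are the lines $L_i(\textbf{X})$, so the automorphism $Q$ maps lines to lines; an induced-$4$-cycle classification makes the direction map $\pi_{\textbf{X}}$ locally, hence by connectedness globally, constant; the residual line-preserving map then factors coordinatewise) is a clean way to get the result from nothing but the combinatorics of the Hamming graph.

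A few points to tighten. (i) You should say explicitly why $Q$ is a full graph automorphism rather than merely edge-preserving: a bijective self-map of a finite graph that sends edges to edges is injective on the (finite) edge set, hence surjective on it, so non-edges go to non-edges. This is what entitles you to say $Q$ permutes maximal cliques and that the image of your induced $4$-cycle $\textbf{X},\textbf{Y},\textbf{W},\textbf{Z}$ is again induced, hence a genuine rectangle. (ii) The stated shared-line identity $\pi_{\textbf{Y}}(j)=\pi_{\textbf{Z}}(j)$ is wrong as written, since $\textbf{Y}$ and $\textbf{Z}$ differ in both coordinates $i$ and $j$ and share no line; you want $\pi_{\textbf{Y}}(j)=\pi_{\textbf{W}}(j)$, though in fact matching the opposite edges $Q(\textbf{Y})Q(\textbf{W})$ and $Q(\textbf{X})Q(\textbf{Z})$ of the image rectangle already yields $\pi_{\textbf{X}}(j)=\pi_{\textbf{Y}}(j)$ directly. (iii) The argument tacitly assumes $|\alp_i|\geq 2$ for every $i$ (otherwise $L_j(\textbf{X})\setminus\{\textbf{X}\}$ can be empty and singleton ``lines'' are not maximal cliques); degenerate coordinates should be discarded at the outset, and they never arise in the Boxworld application since each non-classical system carries at least four fiducial effects. (iv) The final step needs no second rectangle argument: once $\pi=\mathrm{id}$, varying only coordinate $m$ of the input moves the point along an $m$-line, so its image moves along an $m$-line and every output coordinate other than the $m$th is unchanged, which immediately gives $Q(\textbf{X})=(f_1(\textbf{X}^{(1)}),\ldots,f_N(\textbf{X}^{(N)}))$ with each $f_k$ bijective. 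None of these is a structural gap; with those repairs the proof stands as a legitimate self-contained substitute for the citation.
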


\begin{lemma}[Proved in \cite{colbeck10}]\label{locopslemma}
The only reversible transformations allowed on single Boxworld systems are relabellings of measurement choices, and relabellings of measurement outcomes. 
\end{lemma}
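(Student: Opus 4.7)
The approach is to track what $\trans$ does to the $\unit$-decompositions provided by fiducial measurements.

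First, $\trans$ must fix $\unit$: since both $T$ and $T^{-1}$ preserve state normalization, $\innprod{\trans(\unit)}{s}=\innprod{\unit}{s}=1$ for every $s\in\sset$, and the pure product states affinely span the hyperplane of normalized vectors, so $\trans(\unit)$ and $\unit$ must coincide as linear functionals. The paper has already established that $\trans$ sends fiducial effects to fiducial effects and preserves sums, so applying $\trans$ to the identity $\unit=\sum_{a=1}^{K_x}X_{a|x}$, valid for each fiducial measurement $x$, produces a new decomposition of $\unit$ into fiducial effects.

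The key step is to prove that the only such decompositions are the complete measurements $\{X_{a|x'}\}_{a=1}^{K_{x'}}$. This is the single-system analogue of Lemma \ref{redefflemma} (whose statement technically requires $N\geq 2$), but the argument is direct. Pure product states on a single system correspond to choice functions $f:x\mapsto f(x)\in\{1,\ldots,K_x\}$ with $\innprod{X_{a|x}}{s_f}=\delta_{a,f(x)}$, so any decomposition $\unit=\sum_{\alpha}X_{a_\alpha|x_\alpha}$ satisfies $\sum_\alpha \delta_{a_\alpha,f(x_\alpha)}=1$ for every $f$. If two of the labels $x_\alpha$ were distinct, I could pick $f$ agreeing with the corresponding $a_\alpha$ at both positions to force a sum $\geq 2$; and if some outcome of a common measurement $x^{*}$ were absent, I could pick $f$ taking that missing value to force a sum of $0$. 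Both are contradictions, so all $x_\alpha$ coincide with some $x^{*}$ and the $a_\alpha$ exhaust $\{1,\ldots,K_{x^{*}}\}$.

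Combining the pieces, for each measurement $x$ the transformation $\trans$ bijects $\{X_{a|x}\}_a$ onto $\{X_{a|x'}\}_a$ for some $x'$ with $K_{x'}=K_x$. That is exactly a relabelling of measurement choices (between measurements with equal numbers of outcomes) composed with a relabelling of outcomes inside each measurement, which is the claim. The only delicate part is the uniqueness claim for $\unit$-decompositions; once it is in hand, the rest is bookkeeping.
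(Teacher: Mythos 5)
Your proof is correct, and it is not the paper's proof: the paper imports this lemma from \cite{colbeck10}, where (as the Discussion notes) the argument runs through a specific choice of vectors under which reversible transformations become orthogonal maps. Your route is self-contained and in the spirit of the present paper's convex methods. You use only the facts, already asserted in \S 4, that $\trans$ permutes the fiducial effects (extreme rays of the effect cone) and preserves sums; your argument that $\trans$ fixes $\unit$ is sound, since the deterministic states are the vertices of a full-dimensional product of simplices inside the normalization hyperplane and hence linearly span $\mathbb{R}^d$. The heart of your proof is the single-system analogue of Lemma \ref{redefflemma}: the only decompositions of $\unit$ into fiducial effects are the complete measurements. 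Your case analysis is right, though to be airtight it should also exclude a repeated effect within a single measurement (a choice function hitting the repeated outcome forces the sum to be at least $2$) --- the same one-line argument covers it. Two further pieces of bookkeeping you elide: injectivity of $\trans$ on fiducial effects makes the induced map on measurement labels a bijection between measurements with equal numbers of outcomes, and the ``only \ldots are'' phrasing of the lemma formally also requires the easy converse that every relabelling is an allowed reversible transformation. What your approach buys, beyond matching the paper's philosophy, is that it visibly makes no use of equal outcome numbers across measurements, consistent with the paper's remark that the imported lemmas hold without that assumption.
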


\begin{theorem}[Proved in \cite{colbeck10}]
The only reversible transformations allowed in Boxworld with identical non-classical systems are permutations of systems, followed by local relabellings of measurement choices and measurement outcomes.
\end{theorem}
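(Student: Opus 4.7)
The plan is to combine the string-level analysis of $\trans$ in the paragraph preceding the theorem with the two cited combinatorial lemmas. That paragraph establishes that $\trans$, interpreted as a permutation of the string set $\alp_1 \times \cdots \times \alp_N$ of fiducial-effect labels, preserves Hamming distance $1$. Under the identical-systems hypothesis all $\alp_i$ coincide, so Lemma \ref{hamdistlemma} applies and yields a decomposition $\trans = P \circ (\tau_1 \otimes \cdots \otimes \tau_N)$, where $P$ permutes the $N$ tensor factors and each $\tau_i$ is the linear extension of a bijection on the common alphabet $\alp$. This combinatorial factorisation lifts to a linear factorisation on $\mathbb{R}_N$ because the fiducial effects span the space.

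The next step is to check that each factor in this decomposition is itself a reversible Boxworld transformation. For identical systems the permutation $P$ of tensor factors is plainly an allowed reversible operation, so composing $\trans$ with $P^{-1}$ shows that $\bigotimes_i \tau_i$ is reversible. Each $\tau_i$ permutes the fiducial effects of system $i$ bijectively, and hence maps the single-system effect cone (whose extreme rays are the fiducial effects) to itself. Moreover $\tau_i$ must carry any fiducial decomposition of $\unit^{(i)}$ to another such decomposition, since $\trans$ preserves sub-unit effects and the allowed decompositions of a sub-unit effect are controlled by Lemma \ref{redefflemma}; this gives $\tau_i(\unit^{(i)}) = \unit^{(i)}$. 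Therefore each $\tau_i$ is an allowed reversible single-system transformation.

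Finally, Lemma \ref{locopslemma} identifies each such $\tau_i$ as a relabelling of measurement choices and of measurement outcomes on system $i$. Reassembling with the system permutation $P$ gives the desired form of $\trans$, proving the theorem.

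The main obstacle, in my view, is the middle step --- promoting the purely combinatorial factorisation of Lemma \ref{hamdistlemma} into a physical tensor-product factorisation in which each local factor is individually an allowed reversible single-system transformation. Once one has verified that each $\tau_i$ preserves both the local effect cone (via its action on the fiducial extreme rays) and the local unit effect (via the sub-unit preservation argument and Lemma \ref{redefflemma}), the single-system characterisation in Lemma \ref{locopslemma} applies to each factor and the theorem follows by reassembly.
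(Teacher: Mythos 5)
Your proposal is correct and follows essentially the same route as the paper: establish via Corollary \ref{mindecompcor} that $\trans$ permutes sub-unit effects and hence preserves Hamming distance $1$ on fiducial-effect strings, then invoke Lemmas \ref{hamdistlemma} and \ref{locopslemma}. Your middle step --- checking that each local factor $\tau_i$ fixes $\unit^{(i)}$ and is therefore an allowed single-system reversible transformation before applying Lemma \ref{locopslemma} --- is a detail the paper leaves implicit, and your argument for it is sound; the only slight inaccuracy is that Lemma \ref{hamdistlemma} does not require the alphabets to coincide (the identical-systems hypothesis is used earlier, to make Corollary \ref{mindecompcor} applicable to every sub-unit effect).
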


\subsection{Extending to the general case.} We now relax the condition that $K^{(i)}_1 = K^{(1)}_1$ for all systems $i$. This makes the task more complicated, however the method of proof is still to argue that $\trans$ permutes the set of sub-unit effects. 

\begin{lemma}\label{mapredefflemma}
Reversible Boxworld transformations map sub-unit effects to sub-unit effects.
\end{lemma}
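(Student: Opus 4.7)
The plan is to generalize the identical-systems argument of Section 4.1. That argument already handles any $i$-sub-unit effect with $K^{(i)}_1 = K^{(1)}_1$---in particular, all $1$-sub-unit effects---since such an effect admits a decomposition of size $K^{(1)}_1$, and applying $\trans$ yields a multiform effect with a decomposition of that same size, which must then be sub-unit by Corollary \ref{mindecompcor}. The core task is therefore to extend the conclusion to $i$-sub-unit effects with $K^{(i)}_1 > K^{(1)}_1$, where Corollary \ref{mindecompcor} does not apply directly because the smallest available decomposition is already larger than $K^{(1)}_1$.

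For $N \geq 3$, my strategy is to bridge to the base case via a ``double-unit'' effect. Given an $i$-sub-unit effect $E$, pick a position $k \neq i$ whose sub-unit behaviour is already controlled (start with $k=1$ and, more generally, induct on $K^{(\cdot)}_1$), and form $G$ from $E$ by replacing its fiducial component at position $k$ with $\unit^{(k)}$. Then $G$ admits two natural decompositions: $G = \sum_{a_k} E_{a_k}$ as a sum of $i$-sub-unit effects containing $E$, and $G = \sum_{a_i} F_{a_i}$ as a sum of $k$-sub-unit effects. Applying $\trans$ gives
$\sum_{a_k} \trans(E_{a_k}) = \sum_{a_i} \trans(F_{a_i})$, and by the inductive hypothesis each $\trans(F_{a_i})$ is sub-unit. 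I would then use Lemma \ref{redefflemma} to exploit the fine fiducial-level structure of $\trans(G)$: since each $\trans(F_{a_i})$ has a distinguished unit-position and its constituent fiducial effects differ only in that one component, the $K^{(k)}_{x_k} \cdot K^{(i)}_{x_i}$ fiducial effects underlying $\trans(G)$ are constrained to sit in a grid-like arrangement, from which the alternative partition $\{\trans(E_{a_k})\}$ must also consist of sub-unit effects.

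The main obstacle is twofold. Firstly, for $N=2$ the bridging trick degenerates: the only double-unit effect is $\unit$ itself, and $\trans(\unit) = \unit$ carries no useful information. The bipartite case will therefore require a separate argument, most naturally by strengthening Lemma \ref{mindecomplemma} (and hence Corollary \ref{mindecompcor}) beyond its current restriction $r \leq K^{(1)}_1$, so as to conclude that \emph{any} multiform effect whose smallest decomposition has size $K^{(j)}_1$ for some system $j$ must be sub-unit. Secondly, even when $N \geq 3$, the step of inferring that each $\trans(E_{a_k})$ is individually sub-unit from the fact that their sum $\trans(G) = \sum_{a_i} \trans(F_{a_i})$ is a sum of sub-unit effects is not automatic; the alignment of unit positions and shared components across the two partitions of the fiducial effects in $\trans(G)$ is where the real work lies, and I expect this to be the technically delicate heart of the proof.
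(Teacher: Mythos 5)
Your base case and overall skeleton (handle systems with $K^{(i)}_1=K^{(1)}_1$ via Corollary \ref{mindecompcor}, then induct upwards on $K^{(\cdot)}_1$) coincide with the paper's, but the inductive step is where all the difficulty lives, and your proposal leaves it open at exactly the two points you flag. Consider the ``double-unit'' bridge for $N\geq 3$: writing $\trans(G)=\sum_{a_i}\trans(F_{a_i})$ with each $\trans(F_{a_i})$ sub-unit at some position in the already-controlled set $\Omega$ tells you, via Lemma \ref{redefflemma}, that within each \emph{row} of your grid the image fiducial effects agree outside a single component lying in $\Omega$. What you need, however, is that within each \emph{column} they agree outside a single common component \emph{not} in $\Omega$. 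The available tool (Lemma \ref{diffoneefflemma}) only yields that effects in distinct rows differ \emph{somewhere} outside $\Omega$, and nothing in the grid structure upgrades ``differ somewhere outside $\Omega$'' to ``differ in exactly one fixed component whose entries sweep out a full measurement''. Closing that gap is precisely the substance of the lemma; the paper does it by assuming $\{\trans(e_{\alpha})\}$ covers no sub-unit effect and explicitly constructing a pure product state that hits $\trans(e'_1)$ but none of the $\trans(e_{\alpha})$ (using Lemmas \ref{mindecomplemma}, \ref{diffoneefflemma}, \ref{inversesubunitlemma} and \ref{cover-unit-lemma}), contradicting $\sum_{\alpha}\trans(e_{\alpha})=\sum_{\beta}\trans(e'_{\beta})$. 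Your reduction does not avoid this construction; it merely restates the problem one level up.

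The proposed repair of the $N=2$ case is also not a small fix. The strengthening you suggest --- that any multiform effect admitting a minimal decomposition of size $K^{(j)}_1$ must be sub-unit --- fails in general: with $K^{(1)}_1=K^{(2)}_1=2$ and $K^{(3)}_1=4$, the effect $\unit^{(1)}\otimes\unit^{(2)}\otimes X^{(3)}_{a|x}$ is multiform and admits a decomposition with exactly $4=K^{(3)}_1$ elements, yet is not a sub-unit effect. This shows that the hypothesis $r\leq K^{(1)}_1$ in Lemma \ref{mindecomplemma} is doing real work (its induction over systems relies on $r\leq K^{(2)}_1$ when passing to the marginal decomposition) and cannot simply be relaxed; whatever is true in the bipartite setting would need its own proof, which you have not supplied. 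As it stands, the proposal is a plausible strategy outline whose two acknowledged obstacles are the theorem itself, so it does not constitute a proof.
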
 

\begin{proof} See appendix \ref{app_lem5} for a detailed proof. However, we will sketch the main ideas of the proof here.

If $K^{(i)}_1 > K^{(1)}_1$, then sub-unit effects at system $i$ will not have decompositions with $ K^{(1)}_1$ elements, so it is not possible to directly apply Corollary \ref{mindecompcor}. However, Corollary \ref{mindecompcor} \emph{can} still be applied for every system $i$ with  $K^{(i)}_1 = K^{(1)}_1$, such that $\trans$ permutes the set of sub-unit effects within this set of systems.

Now consider a system $j$ for which $K^{(j)}_1$ is the next possible greater value than $K^{(1)}_1$. A $j$-sub-unit effect $E$ must be transformed to something which is multiform with some decomposition $\{e_{\alpha}\}$ involving $K^{(j)}_1$ elements. It turns out that if (a) $\sum_{\alpha} e_{\alpha}$ is not a sub-unit effect, and (b) $\trans$ permutes sub-unit effects on systems with $K^{(i)}_1 = K^{(1)}_1$, then for any other decomposition $\{f_{\beta}\}$ of $\trans(E)$ there exists a pure product state $s$ which hits $f_1$ but none of the effects $\{e_{\alpha}\}$.

This sets up an iterative process, at each stage assuming that $\trans$ permutes the sub-unit effects on systems with smaller numbers of outcomes. The iteration terminates when $K^{(j)}_1$ takes on its maximal value, and $\trans$ thus permutes the complete set of sub-unit effects in the multipartite system. \end{proof} 

Lemma \ref{mapredefflemma} implies that $\trans$ preserves a Hamming distance of 1 between effects in the general case, so we can apply Lemmas \ref{hamdistlemma} and \ref{locopslemma} to deduce that reversible transformations are permutations of systems, followed by local relabellings. However, not every permutation of systems is now possible. We say that systems are of the same \emph{type} if they have the same number of fiducial measurement choices, and the fiducial measurements from each system can be matched up so that paired measurements have the same number of outcomes.

\begin{lemma} \label{permopslemma}
Let $\trans$ be an allowed reversible Boxworld transformation which is a permutation of systems $P$, followed by a composition of local relabellings $Q$. Then $P$ can only permute systems of the same type.
\end{lemma}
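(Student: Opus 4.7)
The plan is to show that each local factor of the relabelling $Q$ must carry the fiducial measurement structure of one system faithfully onto that of another, and that this forces the two systems to be of the same type.

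Let $\pi$ denote the system permutation (so $P$ moves the content originally at system $i$ to position $\pi(i)$), and write $Q = Q_1 \otimes \cdots \otimes Q_N$ as a tensor product of local relabellings, with each $Q_j$ a linear bijection from the single-system space of system $\pi^{-1}(j)$ to that of system $j$. Since $\trans$ is reversible it bijectively maps fiducial effects of $\mathbb{R}_N$ to fiducial effects, and the global unit factorises as $\unit = \bigotimes_j \unit^{(j)}$; applying $\trans$ to product fiducial effects and to $\unit$ and reading off coordinates, each $Q_j$ bijectively maps the fiducial effects of system $\pi^{-1}(j)$ onto those of system $j$ and sends $\unit^{(\pi^{-1}(j))}$ to $\unit^{(j)}$.

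The crucial step is then to show that $Q_j$ sends every fiducial measurement on system $\pi^{-1}(j)$ to a fiducial measurement on system $j$. For a measurement $x$ on system $\pi^{-1}(j)$, the effects $\{X^{(\pi^{-1}(j))}_{a|x}\}_a$ sum to $\unit^{(\pi^{-1}(j))}$, so their $Q_j$-images are fiducial effects of system $j$ summing to $\unit^{(j)}$. I would then establish that the only sets of fiducial effects of a non-classical system summing to the single-system unit are the fiducial measurements: expand each fiducial effect in the basis $\{\unit^{(j)}\} \cup \{X^{(j)}_{a|x'} : 1 \leq x' \leq M^{(j)},\, 1 \leq a < K^{(j)}_{x'}\}$ (using $X^{(j)}_{K^{(j)}_{x'}|x'} = \unit^{(j)} - \sum_{a < K^{(j)}_{x'}} X^{(j)}_{a|x'}$), and match coefficients against $\unit^{(j)}$; the coefficient of $\unit^{(j)}$ forces exactly one ``last outcome'' effect $X^{(j)}_{K^{(j)}_{x^*}|x^*}$ to appear, and the remaining coefficients then force every other outcome of measurement $x^*$ to appear exactly once and forbid any primary effect from a different measurement. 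Alternatively, this single-system fact can be obtained by tensoring with a fixed fiducial effect on any other system to form a sub-unit effect and invoking Lemma \ref{redefflemma} directly.

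Given this, $Q_j$ carries the fiducial measurement $x$ on system $\pi^{-1}(j)$ onto a fiducial measurement $\sigma_j(x)$ on system $j$, with $K^{(\pi^{-1}(j))}_x = K^{(j)}_{\sigma_j(x)}$. As each fiducial effect belongs to a unique measurement and $Q_j$ is a bijection on fiducial effects, $\sigma_j$ is a bijection between the sets of fiducial measurements of the two systems; thus $M^{(\pi^{-1}(j))} = M^{(j)}$ and the measurements pair up with matching numbers of outcomes, so systems $\pi^{-1}(j)$ and $j$ are of the same type, and $P$ only permutes systems of the same type. The main obstacle is the uniqueness-of-decomposition statement for the single-system unit; it is a short linear-algebra argument but must be handled carefully to rule out ``mixed'' decompositions drawing outcomes from distinct measurements.
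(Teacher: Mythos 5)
Your proposal is correct, but it reaches the conclusion by a somewhat different route than the paper. The paper's proof stays at the level of the joint system: it invokes Lemma \ref{mapredefflemma} to conclude that $P$ permutes sub-unit effects, and then uses Lemma \ref{redefflemma} to show that $P$ preserves the two relations ``differ only by measurement outcome on some system'' (same decomposition of a sub-unit effect) and ``differ by measurement choice'' (distinct decompositions), which transports the measurement partition of system $i$ onto that of system $i'$ and forces equal type. You instead factorize $Q$ into local maps $Q_j$ and argue entirely within a single tensor factor, reducing everything to the fact that the only multisets of single-system fiducial effects summing to $\unit^{(j)}$ are the fiducial measurements --- which is exactly the second half of the appendix proof of Lemma \ref{redefflemma}, so both arguments ultimately rest on the same structural fact. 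Your version is more elementary in that it does not need Lemma \ref{mapredefflemma} at all (only the general fact that reversible maps permute fiducial effects and fix the unit effect), whereas the paper's version reuses machinery it has already built. The one step you should make explicit is the passage from ``$\trans$ maps product fiducial effects to product fiducial effects'' to ``each $Q_j$ maps local fiducial effects to local fiducial effects'': splitting a tensor-product identity into its factors only determines each factor up to a scalar, so you need the standard argument (vary one slot at a time to see the scalars are constant, then evaluate on pure product states to pin them to $1$) before the sum-to-unit bookkeeping goes through. Your coefficient-matching argument for the uniqueness of decompositions of $\unit^{(j)}$ is also sound, since exactly the ``last outcome'' effects carry a nonzero $\unit^{(j)}$-coefficient in the chosen basis.
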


\begin{proof}
Suppose that $P$ takes system $i$ to system $i'$. Because $Q$ is constant on any sub-unit effect and $\trans$ permutes sub-unit effects (by Lemma \ref{mapredefflemma}), $P$ also permutes sub-unit effects.

Any two fiducial effects $e_1$, $e_2$ which differ only by measurement outcome on system $i$, belong to the same decomposition of some $i$-sub-unit effect $E$. $P(e_1)$ and $P(e_2)$ therefore belong to the same decomposition of $P(E)$, which by the above reasoning is an $i'$-sub-unit effect for some $i'$. It follows from Lemma \ref{redefflemma} that $e_1$ and $e_2$ differ only by measurement outcome on system $i'$.

If instead $e_1$, $e_2$ differ by measurement choice on system $i$, then they belong to \emph{distinct} decompositions of $E$ (as above), hence $P(e_1)$ and $P(e_2)$ belong to different decompositions of $P(E)$, therefore differ by measurement choice on system $i'$.

Hence $P$ must map the fiducial effects on system $i$ onto the fiducial effects on system $i'$ in such a way that effects belong to the same measurement choice after the transformation iff they also did before the transformation. The only way this is possible is if the systems are of the same type.
\end{proof}

\begin{theorem}
The only reversible transformations of non-classical systems allowed in Boxworld are permutations of systems of the same type, followed by local relabellings of measurement choices and measurement outcomes.
\end{theorem}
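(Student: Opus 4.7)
The plan is simply to chain together the preceding lemmas, since the substantive work has all been carried out in them. First I would invoke Lemma \ref{mapredefflemma}: any reversible $\trans$ permutes the set of sub-unit effects across the full $N$-partite system. By Lemma \ref{redefflemma}, two fiducial effects $e_1, e_2$ differ in exactly one component if and only if they appear together in some decomposition of a sub-unit effect $E$. Because $\trans$ sends decompositions of $E$ to decompositions of $\trans(E)$, and $\trans(E)$ is again sub-unit, $\trans(e_1)$ and $\trans(e_2)$ also differ in exactly one component. Viewing each fiducial effect as a string in $\alp_1 \times \cdots \times \alp_N$, this means $\trans$ preserves Hamming distance $1$.

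Next I would apply Lemma \ref{hamdistlemma} to factor $\trans = Q \circ P$, where $P$ permutes the $N$ components and $Q$ is a product of single-component bijections. Since $\trans$ is a reversible global Boxworld transformation and $P$ is a reversible relabelling of systems, each local factor of $Q$ is itself an allowed reversible transformation on a single Boxworld system. Lemma \ref{locopslemma} then identifies each local factor as a relabelling of measurement choices and outcomes. Finally, Lemma \ref{permopslemma} constrains $P$: it may only send each system to a system of the same type, because otherwise the induced action on fiducial effects could not be consistent with the decomposition structure of sub-unit effects on the image system.

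Assembling these three pieces yields the theorem. I do not anticipate any real obstacle: the delicate parts are already packaged inside Lemmas \ref{mapredefflemma}, \ref{hamdistlemma}, \ref{locopslemma}, and \ref{permopslemma}. The one minor subtlety worth checking is that, in the factorisation $\trans = Q \circ P$ provided by Lemma \ref{hamdistlemma}, both $P$ and the individual single-system factors of $Q$ inherit the allowedness and reversibility needed to feed into Lemmas \ref{locopslemma} and \ref{permopslemma}; but this is immediate once one notes that $P$ is manifestly reversible as a pure relabelling of tensor factors, so that $Q = \trans \circ P^{-1}$ is itself a reversible Boxworld transformation whose action decomposes as a tensor product of single-system reversible maps.
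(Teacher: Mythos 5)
Your assembly of the forward implication is exactly the paper's own route: Lemma \ref{mapredefflemma} gives that $\trans$ permutes the sub-unit effects, Lemma \ref{redefflemma} then upgrades this to preservation of Hamming distance $1$, Lemma \ref{hamdistlemma} yields the factorisation $\trans = Q\circ P$, and Lemmas \ref{locopslemma} and \ref{permopslemma} pin down $Q$ and $P$ respectively. (One small imprecision: two fiducial effects differing in one component need not ``appear together in some decomposition'' of a sub-unit effect --- if they differ by measurement \emph{choice} they lie in \emph{distinct} decompositions of the same sub-unit effect --- but each lies in \emph{some} decomposition of a common sub-unit effect, which is all the argument needs and is how the paper phrases it.) The paper carries out this entire assembly in the text immediately preceding the theorem statement, so none of it is what the paper's proof environment actually contains.

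What the paper reserves for the proof of the theorem --- and what your proposal omits --- is the converse: that every permutation of same-type systems followed by local relabellings of measurement choices and outcomes is in fact an \emph{allowed} reversible Boxworld transformation. Read as a characterisation (which is how the paper reads it: ``we need only check that all transformations of the above form are allowed''), the theorem asserts an equality of sets, and your argument only establishes one inclusion. The missing direction is short but should be stated: such a map acts on the fiducial outcome distributions $\condprob{a_1,\ldots,a_N}{x_1,\ldots,x_N}$ by relabelling the indices $(a_i,x_i)$ and permuting system labels between systems of the same type, which manifestly preserves the constraints \eqref{statepositivity}, \eqref{statenormalization} and \eqref{statenosignaling} and is undone by the inverse relabelling, so it maps $\sset$ bijectively onto itself and is reversible. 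Adding that paragraph closes the gap.
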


\begin{proof}
To complete the proof, we need only check that all transformations of the above form are allowed. This is obvious from considering the action on distributions $\condprob{a_1 , \ldots , a_N}{x_1 , \ldots , x_N}$.
\end{proof}

\section{ Discussion. } We have refined and extended the result of \cite{colbeck10}, and demonstrated that -- as long as no subsystem is classical -- reversible dynamics of arbitrary joint Boxworld systems always take the form of permutations of subsystems of the same type, followed by relabellings of measurement choices and outcomes on individual subsystems. If any single subsystem is classical, the result immediately fails: any local operation on another subsystem which is conditioned on a particular outcome on the classical system, is allowed.

This places Boxworld in stark contrast to quantum theory and nature itself, in which subsystems are clearly able to interact in a reversible way, commonly becoming entangled via continuous and reversible processes. In fact quantum theory obeys a much stronger principle than reversibility: any set of perfectly distinguishable states of a system can be mapped via unitary evolution to any other set of the same size. The importance of reversibility in quantum theory is suggested by the prevalent usage of it (or stronger versions of it) as an axiom in numerous information-theoretic derivations of quantum theory \cite{hardy01, brukner09, hardy11, masanes11, chiribella11}. 

It is worth comparing our proof with the proof in \cite{colbeck10} in a little more detail. In \cite{colbeck10} the authors employ a specific (albeit natural) choice of vectors to represent fiducial effects, for which it turns out that all reversible transformations correspond to orthogonal maps. Since the number of measurement choices $M$ and outcomes $K$ are constant across subsystems in their analysis, the same vector space representation can be chosen for every subsystem, and the fact that $\trans$ preserves inner products can be exploited.

On first impressions, one might think that this method  of proof extends to the case where not all systems are identical. However, it can be shown that it breaks down for joint systems with carefully chosen, differing values of $M$ and $K$. There also seems to be no obvious way to embed these pathological cases into larger Boxworld systems in such a way as to inherit their dynamical properties. 

Our method of proof exploits only the linear structure inherent in any representation, and perhaps as a consequence extends more readily to the whole of Boxworld, and may be easier to extend to other examples of general probabilistic theories. An interesting question is whether our results extend to any theory  whose joint states are generated by the maximal tensor product of the local state spaces (i.e. which admit any joint state  consistent with local fiducial measurements) \cite{barnum06}. Such theories share the property of Boxworld that all effects lie in a convex cone  whose extreme rays are product effects which must be permuted by any reversible transformation. However, the structure of the local effects will generally be more complex than in Boxworld, with the extreme rays of the local effect cone differing from the local fiducial effects. An interesting next step would be to investigate systems whose local state spaces are polygons  \cite{janotta11} and which have the maximal tensor product to see if they have trivial reversible dynamics. 

At present the authors do not know of any theory which obeys reversibility which is not just a subset of quantum theory \cite{Note1}. An interesting conjecture is that any theory in the general probabilistic framework which satisfies local tomography and reversibility can be represented within quantum theory. If this were the case, it would suggest that the reversibility of physcal laws lies at the heart of the obscure mathematical structure of quantum theory.  Alternatively, any counterexample to this conjecture would be a fascinating foil theory with which to compare quantum theory, and perhaps to move beyond it.

\bigskip

\acknowledgments  AJS thanks the Royal Society for their support. SWA is funded by an EPSRC grant.

\appendix

\setcounter{lemma}{0}

\section{} \label{app_lem1}
\begin{lemma} 
Let $E = \sum_{\alpha} e_{\alpha}$ be an $i$-sub-unit effect. Then each $e_{\alpha}$ differs from $E$ only in component $i$. Moreover, the set of $i$th components $\{e^{(i)}_{\alpha}\}$ forms a fiducial measurement on system $i$.
\end{lemma}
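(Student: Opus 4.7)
The plan has two steps: first, use pure product states to force each $e_\alpha$ to match $E$ in every component other than $i$; then, use linear independence in $\mathbb{R}^{d_i}$ to identify the $i$th components as a fiducial measurement. For the first step, write $E^{(j)} = X^{(j)}_{a_j|x_j}$ for $j \neq i$ and suppose for contradiction that some $e_\alpha$ has $e_\alpha^{(j)} = X^{(j)}_{b|y} \neq E^{(j)}$ at some such $j$. I would construct a pure product state $s$ with $\innprod{e_\alpha}{s} = 1$ and $\innprod{E}{s} = 0$, contradicting $E = \sum_\beta e_\beta$ since every $\innprod{e_\beta}{s}$ is non-negative. On system $j$, set $s^{(j)}_y = b$ so $\innprod{e_\alpha^{(j)}}{s^{(j)}} = 1$, and arrange $s^{(j)}_{x_j} \neq a_j$ so $\innprod{E^{(j)}}{s^{(j)}} = 0$: if $y = x_j$ the two conditions coincide and $b \neq a_j$ makes them automatic, while if $y \neq x_j$ the outcomes of distinct measurements may be chosen independently, and $s^{(j)}_{x_j} \neq a_j$ is possible because each fiducial measurement has at least two outcomes. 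For $k \neq j$, choose $s^{(k)}$ hitting $e_\alpha^{(k)}$; the resulting $s$ gives the required contradiction.

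For the second step, each $e_\alpha$ now has the form $E^{(1)} \otimes \cdots \otimes e_\alpha^{(i)} \otimes \cdots \otimes E^{(N)}$, so bilinearity of the tensor product, together with the non-vanishing of the fixed factors $E^{(k)}$, forces $\sum_\alpha e_\alpha^{(i)} = \unit^{(i)}$ in $\mathbb{R}^{d_i}$. Writing each $e_\alpha^{(i)} = X^{(i)}_{b_\alpha|y_\alpha}$ and expanding in the basis $\{\unit^{(i)}\} \cup \{X^{(i)}_{a|x} : a < K^{(i)}_x\}$ via the defining relation $X^{(i)}_{K^{(i)}_x|x} = \unit^{(i)} - \sum_{a < K^{(i)}_x} X^{(i)}_{a|x}$, let $m_y$ count the $\alpha$ with $(y_\alpha, b_\alpha) = (y, K^{(i)}_y)$ and $n_{y,a}$ count those with $(y_\alpha, b_\alpha) = (y, a)$ for $a < K^{(i)}_y$. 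Matching coefficients in $\sum_\alpha e_\alpha^{(i)} = \unit^{(i)}$ yields $\sum_y m_y = 1$ and $n_{y,a} = m_y$ for all admissible $(y, a)$. Hence a unique $y^*$ has $m_{y^*} = 1$, no $e_\alpha^{(i)}$ uses any other measurement, and for $y_\alpha = y^*$ each outcome $a \in \{1, \ldots, K^{(i)}_{y^*}\}$ appears exactly once, so the set $\{e_\alpha^{(i)}\}$ is precisely the fiducial measurement $y^*$ on system $i$.

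The main obstacle is the pure-product-state construction in step one: it requires a clean case-split on whether $y$ equals $x_j$ and uses the standing (non-classical) assumption that each fiducial measurement has at least two outcomes, so that one can independently force $\innprod{e_\alpha^{(j)}}{s^{(j)}} = 1$ while $\innprod{E^{(j)}}{s^{(j)}} = 0$. Once that is in place, the rest is essentially dimension counting against the chosen basis of $\mathbb{R}^{d_i}$.
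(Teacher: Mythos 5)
Your proof is correct. Step one is essentially the paper's own argument: the same case split on whether the offending component differs by measurement choice or by outcome, the same pure product state hitting $e_\alpha$ but not $E$, and the same contradiction with $\innprod{e_\alpha}{s}\leq\innprod{E}{s}$ (both you and the paper implicitly use that every fiducial measurement has at least two outcomes to arrange $s^{(j)}_{x_j}\neq a_j$ when the measurements differ). Step two, however, takes a genuinely different route. The paper stays operational: since $\sum_\alpha e^{(i)}_\alpha=\unit^{(i)}$, every pure state on system $i$ hits exactly one of the $e^{(i)}_\alpha$; if two of them belonged to different measurements some pure state would hit both, and if they formed an incomplete measurement some pure state would hit none. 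You instead expand each $e^{(i)}_\alpha$ in the explicit basis $\{\unit^{(i)}\}\cup\{X^{(i)}_{a|x}:a<K^{(i)}_x\}$ and match coefficients, obtaining $\sum_y m_y=1$ and $n_{y,a}=m_y$. Your version is a clean dimension count that additionally shows each outcome occurs with multiplicity exactly one (a point the paper's set-level phrasing glosses over), at the cost of leaning on the particular linearly independent representation chosen for the local effect vectors; the paper's version is representation-independent and matches the pure-state ``hitting'' style used in all the later lemmas, so it integrates more uniformly with the rest of the argument. Either way the lemma is fully established.
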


\begin{proof}  We will prove the lemma by contradiction.  Suppose first that $e_{\alpha'}^{(j)}\neq E^{(j)}$ for some $\alpha'$ and some $j\neq i$. Let $E^{(j)} = X^{(j)}_{a_j|x_j}$ and $e_{\alpha'}^{(j)} = X^{(j)}_{a'_j|x'_j}$. Either $x_j \neq x'_j$, or $x_j = x'_j$ but $a_j \neq a'_j$, so we can construct a pure product state $s^{(j)}$ on system $j$ such that $s^{(j)}_{x_j} \neq a_j$ and $s^{(j)}_{x'_j} = a'_j$, i.e. which hits $e_{\alpha'}$ but not $E$. Then for any pure product state $s$ whose $j$th component is $s^{(j)}$, $\innprod{E}{s} = 0$ but $\innprod{e_{\alpha'}}{s} = 1$, contradicting the fact that  $\innprod{e_{\alpha'}}{s} \leq \innprod{E}{s}$.

$\{ e^{(i)}_{\alpha} \}$ is a set of fiducial effects satisfying $\sum_{{\alpha}} e^{(i)}_{\alpha} = \unit^{(i)}$, hence any pure state $s^{(i)}$ on system $i$ must hit exactly one of the $e^{(i)}_{\alpha}$. If any two of the $e^{(i)}_{\alpha}$ are effects corresponding to different measurements, then there is a pure state $s^{(i)}$ which hits both of them. Hence the effects all belong to the same fiducial measurement $x$; if $\{e^{(i)}_{\alpha}\}$ is not the \emph{full} set of outcomes of measurement $x$, then there is a pure state $s^{(i)}$ which hits none of them. It follows that $\{e^{(i)}_{\alpha}\}$ forms a fiducial measurement on system $i$. 
\end{proof}

\section{} \label{app_lem2}
\begin{lemma} 
For $r\leq K^{(1)}_1$ suppose that $\{e_{\alpha}\}_{\alpha=1}^{r}$does not cover any sub-unit effects. Then for any fiducial effect $f \notin \{e_{\alpha}\}$, there is a pure product state which hits $f$ but none of the $e_{\alpha}$.
\end{lemma}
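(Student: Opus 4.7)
The plan is to run a counting (pigeonhole-style) argument on the set $S_f$ of pure product states that hit $f$. I would parametrise $S_f$ explicitly, compute what fraction of $S_f$ each $e_\alpha$ excludes, apply the union bound, and then show that the equality case of the union bound is exactly what forces $\{e_\alpha\}$ to cover a sub-unit effect, contradicting the hypothesis.

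Concretely, writing $f = \bigotimes_i X^{(i)}_{a_i^f \mid x_i^f}$, a pure product state $s = s^{(1)}\otimes\cdots\otimes s^{(N)}$ hits $f$ iff $s^{(i)}_{x_i^f}=a_i^f$ for every $i$, with the remaining values $s^{(i)}_{x}$ ($x\neq x_i^f$) chosen freely; so $|S_f|=\prod_{i}\prod_{x\neq x_i^f} K^{(i)}_{x}$. Call $e_\alpha = \bigotimes_i X^{(i)}_{a_i^\alpha\mid x_i^\alpha}$ \emph{compatible} with $f$ if, wherever $x_i^\alpha = x_i^f$, we also have $a_i^\alpha=a_i^f$; incompatible $e_\alpha$ are not hit by any state in $S_f$, so we can discard them. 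For each remaining $e_\alpha$, define $D_\alpha=\{i:x_i^\alpha\neq x_i^f\}$, which is nonempty since $e_\alpha \neq f$. A state in $S_f$ hits $e_\alpha$ iff $s^{(i)}_{x_i^\alpha}=a_i^\alpha$ for every $i\in D_\alpha$, so the set $T_\alpha := \{s\in S_f : s \text{ hits } e_\alpha\}$ has $|T_\alpha|/|S_f| = \prod_{i\in D_\alpha} 1/K^{(i)}_{x_i^\alpha} \leq 1/K^{(1)}_1$, the last inequality using $|D_\alpha|\geq 1$ and the fact that $K^{(1)}_1$ is the smallest $K$.

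The union bound then gives $|\bigcup_\alpha T_\alpha|/|S_f| \le r/K^{(1)}_1 \le 1$. Whenever this inequality is strict, there is an $s\in S_f$ avoiding every $e_\alpha$ and we are done. The only way the bound can saturate is if $r=K^{(1)}_1$, each $|D_\alpha|=1$ with its single $K$-value equal to $K^{(1)}_1$, and the $T_\alpha$ are pairwise disjoint and collectively partition $S_f$.

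The hard part — and the main obstacle — is to extract a sub-unit cover from this rigid equality case. Writing $D_\alpha=\{i_\alpha\}$, the set $T_\alpha$ is carved out by a single coordinate constraint $s^{(i_\alpha)}_{x_\alpha^{i_\alpha}}=a_\alpha^{i_\alpha}$. If two pivots $(i_\alpha,x_\alpha^{i_\alpha})$ and $(i_\beta,x_\beta^{i_\beta})$ were distinct, then $T_\alpha\cap T_\beta$ would be nonempty (the two coordinate constraints would be independent and jointly satisfiable), contradicting pairwise disjointness. Hence all $r$ pivots must coincide at some $(i^*,x^*)$ with $x^*\neq x_{i^*}^f$, and because the $T_\alpha$'s must exhaust $S_f$, the outcomes $\{a_\alpha^{i^*}\}$ must be all of $\{1,\ldots,K^{(i^*)}_{x^*}\}$. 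But then $\sum_\alpha e_\alpha = f^{(1)}\otimes\cdots\otimes \unit^{(i^*)}\otimes\cdots\otimes f^{(N)}$, an $i^*$-sub-unit effect covered by $\{e_\alpha\}$, which contradicts the hypothesis. So the equality case is impossible, and a pure product state of the desired kind always exists.
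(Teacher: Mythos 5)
Your proof is correct, and it takes a genuinely different route from the paper's. The paper proves this lemma by induction on the number of systems $N$: it peels off system $1$, splits into cases according to whether the first components $\{e^{(1)}_{\alpha}\}$ fill some measurement $x'\neq x_1$, and applies the inductive hypothesis to the restricted decompositions on systems $2,\ldots,N$. You instead run a single global counting argument on the set $S_f$ of pure product states hitting $f$: after discarding the $e_{\alpha}$ incompatible with $f$ (which no state in $S_f$ can hit), each remaining $e_{\alpha}$ excludes a fraction $\prod_{i\in D_{\alpha}}1/K^{(i)}_{x_i^{\alpha}}\le 1/K^{(1)}_1$ of $S_f$, the union bound gives $r/K^{(1)}_1\le 1$, and saturation forces $r=K^{(1)}_1$ with all constraints sitting on a single free coordinate $(i^*,x^*)$ and exhausting its outcomes --- which is exactly a covered sub-unit effect, contradicting the hypothesis. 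I checked the delicate steps: $|D_{\alpha}|=1$ in the tight case does follow, since a product of two or more factors each at most $1/K^{(1)}_1$ is strictly smaller than $1/K^{(1)}_1$; and pairwise disjointness of the $T_{\alpha}$ does force the (system, measurement) pivots to coincide, because constraints on distinct free coordinates of a pure product state are always jointly satisfiable. Your argument is shorter, avoids induction, and makes transparent where each hypothesis enters: $r\le K^{(1)}_1$ keeps the union bound non-trivial, and the non-covering assumption rules out the unique saturating configuration. The paper's componentwise construction has the advantage that essentially the same template is reused later (in Lemma \ref{cover-unit-lemma} and the proof of Lemma \ref{mapredefflemma}, where the state must be assembled system by system under extra constraints), but for the statement at hand your counting argument is a clean and complete alternative.
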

\begin{proof}[Proof.]
Let $f = X_{a_1|x_1}^{(1)}\otimes \cdots \otimes X_{a_N|x_N}^{(N)}$. We proceed by induction on the number of systems $N$. When $N=1$ set $s_{x_1} = a_1$ to ensure that $s$ hits  $X_{a_1|x_1}^{(1)}$. The conditions imply that no partial sum of $\{e_{\alpha}\}$ equals the unit effect, hence for each other choice of measurement $x' \neq x_1$, it must be possible to choose $s_{x'}$ such that $X_{s_{x'}|x'} \notin \{e_{\alpha}\}$. By construction $s$ hits $X_{a_1|x_1}^{(1)}$ but none of the $e_{\alpha}$.

When $N>1$, note that for any fiducial effect $g$ on system 1, the set $\{e^{\{2,\ldots , N\}}_{\alpha} : e^{(1)}_{\alpha} = g\}$ is a decomposition of some  effect on the remaining $N-1$ systems with at most $K^{(1)}_1 \leq K^{(2)}_1$ elements. This decomposition satisfies the conditions of the Lemma, hence by induction there exists a pure product state $s^{(2)} \otimes \cdots \otimes s^{(N)}$ which hits $f^{\{2,\ldots N\}}$ but none of the set $\{e^{\{2,\ldots , N\}}_{\alpha} : e^{(1)}_{\alpha} = g\}$. 

Again, it is necessary to set $s^{(1)}_{x_1}=a_1$. Consider the set $\{e_{\alpha}^{(1)}\}$, and the outcomes for measurements other than $x_1$ on system 1. One of two cases must occur:

\begin{itemize}

\item[(a)] The set $\{e_{\alpha}^{(1)}\}$ fills none of the other measurements, i.e. for every $x' \neq x_1$, there is an $a_{x'}$ such that $X_{a_{x'}|x'}^{(1)} \notin \{e^{(1)}_{\alpha}\}$. For each such $x'$ set $s_{x'}^{(1)} = a_{x'}$ so that $s$ can hit $e_{\alpha}$ only if $e_{\alpha}^{(1)} = f^{(1)}$. However, using the inductive hypothesis, there exists a pure product state $s^{(2)} \otimes \cdots \otimes s^{(N)}$ which hits $f^{\{2,\ldots ,N\}}$ but none of the set $\{e^{\{2,\ldots , N\}}_{\alpha} : e^{(1)}_{\alpha} = f^{(1)}\}$. 

\item[(b)] There exists a measurement $x'\neq x_1$ on system 1 with $K^{(1)}_{x'}=r$ which is filled by the set $\{e_{\alpha}^{(1)}\}$, i.e. (after reordering) $e_{\alpha}^{(1)} = X_{{\alpha}|x'}^{(1)}$ for $1 \leq {\alpha} \leq r$. $\{e_\alpha\}$ covers no sub-unit effects, so there must be some ${\alpha}'$ and some system $i \neq 1$ such that $e^{(i)}_{{\alpha}'} \neq f^{(i)}$. Set $s^{(1)}_{x'} = {\alpha}'$ so that $s$ does not hit any $e_{\alpha}$ with ${\alpha} \neq {\alpha}'$; the remaining components of $s^{(1)}$ may be chosen arbitrarily. By the inductive hypothesis, there exists a pure product state $s^{(2)} \otimes \cdots \otimes s^{(N)}$ which hits $f^{\{2,\ldots ,N\}}$ but not the single effect $e^{\{2,\ldots ,N\}}_{{\alpha}'}$.

\end{itemize}

In both cases, by construction $s = s^{(1)} \otimes \cdots \otimes s^{(N)}$ hits $f$ but none of the $e_{\alpha}$.
\end{proof}

\section{} \label{app_lem5}

It will be convenient to let $\subuniteffsati$ denote the set of sub-unit effects at system $i$, and for a subset of systems $\Omega \subseteq [N]$, let $\mathcal{S}_{\Omega} = \cup_{i \in \Omega} \subuniteffsati$. The proof of Lemma \ref{mapredefflemma} relies on some simple results about reversible transformations.

\setcounter{lemma}{5}

\begin{lemma} \label{diffoneefflemma}
Let $\Omega \subseteq [N]$ and suppose that $\trans$  is an allowed reversible transformation which permutes the set $\mathcal{S}_{\Omega}$. Then the tranformations of two fiducial effects will be identical outside $\Omega$ if and only if the original effects were. i.e.:
\begin{align*}
e_1^{(k)} = e_2^{(k)}\; \forall k \in [N] \backslash \,\Omega \iff \trans(e_1)^{(k)} = \trans(e_2)^{(k)} \\
\forall k \in [N] \backslash \,\Omega
\end{align*}
\end{lemma}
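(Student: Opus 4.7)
The plan is to establish the forward direction through a chain argument, reducing to the one-component-differing case handled by Lemma~\ref{redefflemma}, and then obtain the reverse direction by applying the forward direction to $(\trans)^{-1}$.

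For the forward direction, suppose $e_1, e_2$ are fiducial effects with $e_1^{(k)} = e_2^{(k)}$ for every $k \in [N]\setminus\Omega$, and let $\{i_1, \ldots, i_m\} \subseteq \Omega$ enumerate the components where the two effects differ. I would build a chain $e_1 = f_0, f_1, \ldots, f_m = e_2$ where each $f_j$ is the fiducial effect obtained from $f_{j-1}$ by replacing its $i_j$-th tensor component by $e_2^{(i_j)}$. Each consecutive pair $(f_{j-1}, f_j)$ is then a pair of fiducial effects differing in exactly one component, namely $i_j \in \Omega$.

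The key step is to show $\trans(f_{j-1})^{(k)} = \trans(f_j)^{(k)}$ for every $k \notin \Omega$. Define $E_j$ to be the $i_j$-sub-unit effect with $E_j^{(i_j)} = \unit^{(i_j)}$ and $E_j^{(k)} = f_{j-1}^{(k)} = f_j^{(k)}$ for $k \neq i_j$. Both $f_{j-1}$ and $f_j$ each appear in some decomposition of $E_j$ (possibly different decompositions, if their $i_j$-components come from different fiducial measurements). Since $E_j \in \subuniteffsati \subseteq \mathcal{S}_\Omega$ with $i=i_j$ and $\trans$ permutes $\mathcal{S}_\Omega$, the image $\trans(E_j)$ is an $i'_j$-sub-unit effect for some $i'_j \in \Omega$. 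Because $\trans$ maps decompositions to decompositions, $\trans(f_{j-1})$ and $\trans(f_j)$ each lie in some decomposition of $\trans(E_j)$; applying Lemma~\ref{redefflemma} to $\trans(E_j)$ then forces
\[
\trans(f_{j-1})^{(k)} = \trans(E_j)^{(k)} = \trans(f_j)^{(k)} \quad\text{for all } k \neq i'_j,
\]
and in particular for all $k \notin \Omega$. Chaining these equalities over $j = 1, \ldots, m$ gives $\trans(e_1)^{(k)} = \trans(e_2)^{(k)}$ for every $k \notin \Omega$.

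For the reverse direction, note that $(\trans)^{-1}$ is itself an allowed reversible transformation and, as the set-theoretic inverse of a bijection of $\mathcal{S}_\Omega$, also permutes $\mathcal{S}_\Omega$. Applying the forward direction to $(\trans)^{-1}$ with arguments $\trans(e_1)$ and $\trans(e_2)$ yields the converse implication. The main subtlety to watch for is that in the chain construction $f_{j-1}$ and $f_j$ need not belong to the same decomposition of $E_j$ (if their $i_j$-components come from different measurements on system $i_j$); however the full strength of Lemma~\ref{redefflemma}, which constrains every fiducial effect in every decomposition of a sub-unit effect to agree with it outside the sub-unit coordinate, absorbs this case without extra work.
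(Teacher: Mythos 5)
Your proposal is correct and follows essentially the same route as the paper: the one-component case is handled by placing both effects in decompositions of a common sub-unit effect and invoking Lemma~\ref{redefflemma} on its image, the general case follows by changing one component at a time, and the converse is obtained by applying the forward direction to $(\trans)^{-1}$. The subtlety you flag (that the two effects may lie in different decompositions of the sub-unit effect) is also acknowledged and resolved identically in the paper's proof.
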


\begin{proof}
Suppose firstly that the fiducial effects $e_1$ and $e_2$ differ only in one component $i \in \Omega$. $e_1$ and $e_2$ belong to (possibly different) decompositions of a unique sub-unit effect $E \in \subuniteffsati$. By assumption $\trans(E)$ is an $i'$-sub-unit effect for some $i' \in \Omega$; $\trans(e_1)$ and $\trans(e_2)$ belong to decompositions of $\trans(E)$, hence by Lemma \ref{redefflemma} can only differ in component $i'$.

Suppose now that $e_1$, $e_2$ satisfy $e_1^{(k)} = e_2^{(k)}$ for all $k \notin \Omega$, but that they differ in any number of components belonging to $\Omega$. Then it is possible to move from $e_1$ to $e_2$ by changing one component at a time (each component belonging to $\Omega$). At each step, $\trans$ maps the corresponding pair of effects to a pair which differ only in components belonging to $\Omega$. Hence  $\trans(e_1)^{(k)} = \trans(e_2)^{(k)}$ for all $k \notin \Omega$.

To prove the converse direction, note that if $\trans$ is an allowed reversible transformation which permutes the set $\mathcal{S}_{\Omega}$, then so is $(\trans)^{-1}$.
\end{proof}

\begin{lemma} \label{inversesubunitlemma}
Suppose that $E=\sum_{\alpha=1}^{r} e_{\alpha}$ is a sub-unit effect, and that $\{\trans(e_{\alpha})\}$ covers a sub-unit effect $F$. Then $\trans(E) = F$.
\end{lemma}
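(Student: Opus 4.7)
The plan is to argue by contradiction using Corollary \ref{redeffcor}. By hypothesis there is some subset $B \subseteq \{1, \ldots, r\}$ with $\sum_{\alpha \in B} \trans(e_{\alpha}) = F$, and the task reduces to showing that $B$ is the full index set: linearity will then immediately give $\trans(E) = \sum_{\alpha=1}^{r} \trans(e_{\alpha}) = F$.

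The first step I would take is to note that $(\trans)^{-1}$ is itself the adjoint of a reversible transformation, so like $\trans$ it permutes the set of fiducial effects bijectively and therefore preserves the property of being multiform. Indeed, if a multiform effect admits two distinct fiducial decompositions $\{g_\alpha\}$ and $\{h_\beta\}$, then applying a bijection on fiducial effects produces two distinct fiducial decompositions of the image, which is therefore multiform.

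Next I would apply $(\trans)^{-1}$ to both sides of $\sum_{\alpha \in B} \trans(e_{\alpha}) = F$; by linearity this gives $\sum_{\alpha \in B} e_{\alpha} = (\trans)^{-1}(F)$. Since $F$ is a sub-unit effect it is multiform, and so the right-hand side is multiform by the previous observation. But Corollary \ref{redeffcor} tells us that the decomposition $\{e_{\alpha}\}_{\alpha=1}^{r}$ of the sub-unit effect $E$ does not strictly cover any multiform effect. Hence $B$ cannot be a strict subset of $\{1,\ldots,r\}$, forcing $B = \{1,\ldots,r\}$ and completing the proof.

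The argument is short once one identifies Corollary \ref{redeffcor} as the right tool, and the only real ingredient is the observation that reversibility of $\trans$ transfers to preservation of multiformity under the inverse. That in turn reduces immediately to $(\trans)^{-1}$ being a bijection on fiducial effects, so I do not anticipate a substantive obstacle in writing out the details.
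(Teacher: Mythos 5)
Your argument is correct and is essentially the paper's own proof: both apply $(\trans)^{-1}$ to the covering relation, use the fact that the inverse is a bijection on fiducial effects to conclude $(\trans)^{-1}(F)$ is multiform, and then invoke Corollary \ref{redeffcor} to rule out a strict cover. No substantive difference.
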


\begin{proof}
Without loss of generality let $\sum_{\alpha=1}^{s} \trans(e_{\alpha}) = F$ for $s \leq r$, and let $\sum_{\beta} f_{\beta}$ be a distinct decomposition of $F$. Then $E$ covers the multiform effect $(\trans)^{-1}(F) = \sum_{\alpha = 1}^{s} e_{\alpha} = \sum_{\beta = 1} (\trans)^{-1}(f_{\beta})$. It follows from Corollary \ref{redeffcor} that $s=r$ and that $\trans(E) = F$.
\end{proof}

\begin{lemma} \label{cover-unit-lemma}
Suppose that $\{e_{\alpha}\}$ does not cover any sub-unit effects, but that there exists some system $i$ for which $\sum_{\alpha} e_{\alpha}^{(i)} = \unit^{(i)}$. Then for any fiducial effect $f \notin \{e_{\alpha}\}$, there exists a pure product state which hits $f$ but none of the $e_{\alpha}$.
\end{lemma}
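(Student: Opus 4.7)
Plan for the proof of Lemma~\ref{cover-unit-lemma}:

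The plan is to exploit the constraint $\sum_{\alpha} e_{\alpha}^{(i)} = \unit^{(i)}$ to force a rigid structure on the components at system $i$, and then build the required pure product state $s$ componentwise. The first step is to show that $\{e_{\alpha}^{(i)}\}_{\alpha}$ coincides with the full set of outcomes of some single fiducial measurement $x^*$ on system $i$, so that after reindexing $e_{\alpha}^{(i)} = X^{(i)}_{\alpha|x^*}$ for $\alpha = 1,\ldots,r = K^{(i)}_{x^*}$. The argument mirrors that of Lemma~\ref{redefflemma}: for any pure state $s^{(i)}$ one has $\sum_{\alpha} \innprod{e_{\alpha}^{(i)}}{s^{(i)}} = \innprod{\unit^{(i)}}{s^{(i)}} = 1$, and since each term is in $\{0,1\}$, exactly one $e_{\alpha}^{(i)}$ is hit. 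If two $e_{\alpha}^{(i)}$ belonged to different measurements, a pure state realising both outcomes would hit both; once all $e_{\alpha}^{(i)}$ share a measurement $x^*$, counting pure states shows that each outcome of $x^*$ occurs exactly once.

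Given this structure, any candidate pure state $s^{(i)}$ is labelled by a unique index $\alpha^* \in \{1,\ldots,r\}$ with $s^{(i)}$ hitting $e_{\alpha^*}^{(i)}$ and missing all other $e_{\alpha}^{(i)}$, namely $\alpha^* = s^{(i)}_{x^*}$. The next step is to choose $\alpha^*$, splitting into two cases according to whether $f^{(i)}$ lies in measurement $x^*$. If $f^{(i)} = X^{(i)}_{a|x}$ with $x \neq x^*$, then hitting $f^{(i)}$ leaves $s^{(i)}_{x^*}$ free, so I may pick any $\alpha^* \in \{1,\ldots,r\}$; I pick one for which $e_{\alpha^*}^{[N]\setminus\{i\}} \neq f^{[N]\setminus\{i\}}$. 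Such a choice must exist, for otherwise every $e_{\alpha}$ restricts to $f^{[N]\setminus\{i\}}$ outside system $i$, making $\sum_{\alpha} e_{\alpha} = f^{[N]\setminus\{i\}} \otimes \unit^{(i)}$ an $i$-sub-unit effect and violating the hypothesis that $\{e_{\alpha}\}$ covers no sub-unit. If instead $f^{(i)} = X^{(i)}_{a|x^*}$, hitting $f^{(i)}$ forces $s^{(i)}_{x^*} = a$ and hence $\alpha^* = a$; then $f \notin \{e_{\alpha}\}$ together with $f^{(i)} = e_{\alpha^*}^{(i)}$ still implies $f^{[N]\setminus\{i\}} \neq e_{\alpha^*}^{[N]\setminus\{i\}}$.

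With $\alpha^*$ fixed, I assemble $s$ componentwise: set $s^{(i)}$ to hit $f^{(i)}$ with $s^{(i)}_{x^*} = \alpha^*$; pick any component $j^* \neq i$ on which $f^{(j^*)} \neq e_{\alpha^*}^{(j^*)}$ (which exists by the choice of $\alpha^*$) and choose $s^{(j^*)}$ to hit $f^{(j^*)}$ while missing $e_{\alpha^*}^{(j^*)}$ --- a routine assignment of at most two outcome values within the one or two relevant fiducial measurements; choose the remaining $s^{(j)}$ for $j \notin \{i, j^*\}$ arbitrarily subject to hitting $f^{(j)}$. Then $s$ hits $f$ by construction, misses $e_{\alpha^*}$ via component $j^*$, and misses each other $e_{\alpha}$ already at component $i$, since for $\alpha \neq \alpha^*$ we have $s^{(i)}_{x^*} = \alpha^* \neq \alpha$ so $s^{(i)}$ fails to hit $e_{\alpha}^{(i)} = X^{(i)}_{\alpha|x^*}$. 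The main content of the argument is the selection of $\alpha^*$ --- both the non-covering-sub-unit hypothesis and the assumption $f \notin \{e_{\alpha}\}$ enter precisely there --- after which the construction is immediate.
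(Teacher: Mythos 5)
Your proof is correct and takes essentially the same route as the paper's: identify $\{e_{\alpha}^{(i)}\}$ as the complete outcome set of a single measurement $x^*$, split on whether $f^{(i)}$ belongs to $x^*$, use the no-sub-unit hypothesis (resp.\ $f \notin \{e_{\alpha}\}$) to pick an $\alpha^*$ whose effect differs from $f$ outside system $i$, and then kill $e_{\alpha^*}$ on that other component while all remaining $e_{\alpha}$ are already killed at system $i$. The only cosmetic difference is that you build the state on the remaining systems by hand where the paper cites Lemma~\ref{mindecomplemma}.
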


\begin{proof}
Let $f^{(i)} = X^{(i)}_{a|x}$ and $\Omega_{i} = [N]\setminus \{i\}$. Note that $\{e_{\alpha}^{(i)}\}$ is the complete set of outcomes for some fiducial measurement $x'$ on system $i$: without loss of generality, $e_{\alpha}^{(i)} = X^{(i)}_{\alpha | x'}$. 

If $x'=x$, then $f^{(i)} = e^{(i)}_{a}$. Set  $s^{(i)}_{x} = a$ and choose the remaining components of $s^{(i)}$ arbitrarily, so that $s^{(i)}$ hits $e^{(i)}_{a}$ but none of the other $e^{(i)}_{\alpha}$. Note that $f^{\Omega_{i}}$ and $e^{\Omega_{i}}_{a}$ must be distinct fiducial effects, so by Lemma 2 there exists a pure product state $s^{\Omega_{i}}$ which hits $f^{\Omega_{i}}$ but not $e^{\Omega_{i}}_{a}$.

If $x'\neq x$, then since $\sum e_{\alpha}$ is not a sub-unit effect, there exists ${\alpha}'$ and $i' \neq i$ such that $e_{{\alpha}'}^{(i')} \neq f^{(i')}$. Set $s^{(i)}_{x} = a$ and $s^{(i)}_{x'} = {\alpha}'$, and choose the remaining components of $s^{(i)}$ arbitrarily. By Lemma \ref{mindecomplemma} there is a pure product state $s^{\Omega_i}$ which hits $f^{\Omega_i}$ but not the single fiducial effect $e_{{\alpha}'}^{\Omega_i}$.

In both cases, combining $s^{(i)}$ with $s^{\Omega_{i}}$ gives a pure product state $s$ which hits $f$ but none of the $e_{\alpha}$.
\end{proof}

We are now in a position to prove Lemma \ref{mapredefflemma} from the main text. 

\setcounter{lemma}{4}

\begin{lemma}
Reversible Boxworld transformations map sub-unit effects to sub-unit effects,  so long as none of the systems are classical. 
\end{lemma}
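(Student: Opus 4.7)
The plan is to proceed by induction on the distinct values $r_1<r_2<\cdots<r_m$ taken by $K^{(i)}_1$ across the $N$ subsystems, with $r_1=K^{(1)}_1$ fixed by the ordering convention. Setting $\Omega_k=\{i\in[N]:K^{(i)}_1\leq r_k\}$, I will show by induction on $k$ that $\trans$ permutes $\mathcal{S}_{\Omega_k}$; specialising to $k=m$ (so $\Omega_m=[N]$) then recovers the lemma.

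For the base case $k=1$, every sub-unit effect $E$ at a system $i\in\Omega_1$ carries a decomposition of size $r_1=K^{(1)}_1$. Transporting this decomposition by $\trans$ exhibits $\trans(E)$ as a multiform effect with a decomposition of the minimum possible size, so Corollary \ref{mindecompcor} forces $\trans(E)$ to be a sub-unit effect. The ordering convention on systems ensures the resulting sub-unit effect again sits in $\mathcal{S}_{\Omega_1}$, and bijectivity of $\trans$ on the finite set of fiducial effects upgrades this embedding to a permutation.

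For the inductive step, assume $\trans$ permutes $\mathcal{S}_{\Omega_k}$, fix a system $j$ with $K^{(j)}_1=r_{k+1}$, and let $E=\sum_{\alpha=1}^{r_{k+1}}e_\alpha$ be a $j$-sub-unit effect decomposed using the smallest measurement on $j$. Suppose for contradiction that $\trans(E)$ is not a sub-unit effect. Then Lemma \ref{inversesubunitlemma} immediately rules out $\{\trans(e_\alpha)\}$ covering any sub-unit effect, since any such covered effect would be forced to equal $\trans(E)$. Multiformity of $\trans(E)$ still supplies a distinct decomposition $\{f_\beta\}$ with some $f_{\beta_0}\notin\{\trans(e_\alpha)\}$, and it suffices to produce a pure product state $s$ that hits $f_{\beta_0}$ but none of the $\trans(e_\alpha)$: such a state makes $\sum_\alpha\langle\trans(e_\alpha),s\rangle=0$ clash with $\langle\trans(E),s\rangle\geq\langle f_{\beta_0},s\rangle=1$.

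The hard part will be producing this state, for which Lemma \ref{cover-unit-lemma} is the natural tool: the hypothesis that $\{\trans(e_\alpha)\}$ covers no sub-unit effect is already in hand, and the remaining need is to exhibit a component $i^*$ on which $\sum_\alpha\trans(e_\alpha)^{(i^*)}=\unit^{(i^*)}$. I expect the inductive hypothesis to enter precisely here through Lemma \ref{diffoneefflemma}: the fiducial effects $\trans^{-1}(f_\beta)$ form a decomposition of the $j$-sub-unit effect $E$ and therefore, by Lemma \ref{redefflemma}, agree with $E$ off the $j$-component; transporting this constraint back through $\trans$ using the inductive permutation of $\mathcal{S}_{\Omega_k}$ should pin down a single component $i^*$ whose $\trans(e_\alpha)^{(i^*)}$ exhaust a full fiducial measurement on $i^*$. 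Once this is secured, Lemma \ref{cover-unit-lemma} produces the contradicting state, the inductive step closes, and iterating up to $k=m$ finishes the argument.
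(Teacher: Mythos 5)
Your overall strategy --- induction over the distinct values of $K^{(i)}_1$, the base case via Corollary \ref{mindecompcor}, and the reduction of the inductive step to constructing a pure product state that hits one decomposition of $\trans(E)$ but misses the other --- is exactly the paper's. The base case is fine, and your use of Lemma \ref{inversesubunitlemma} to convert ``$\{\trans(e_\alpha)\}$ covers a sub-unit effect'' into ``$\trans(E)$ is a sub-unit effect'' is correct.

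The gap is in the inductive step, precisely at the point you flag as ``the hard part''. You propose to apply Lemma \ref{cover-unit-lemma} directly to $\{\trans(e_\alpha)\}$, which requires exhibiting a system $i^*$ with $\sum_\alpha \trans(e_\alpha)^{(i^*)} = \unit^{(i^*)}$, and you hope Lemma \ref{diffoneefflemma} will ``pin down'' such a component. It will not. Since the $e_\alpha$ differ pairwise only in component $j \notin \Omega_k$, Lemma \ref{diffoneefflemma} only tells you that the $\trans(e_\alpha)$ differ pairwise \emph{somewhere} outside $\Omega_k$; that ``somewhere'' can vary from pair to pair, so there need not be a single component on which all $r_{k+1}$ transformed effects are distinct, let alone one on which their components exhaust a complete fiducial measurement. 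This is not a removable technicality: the paper's own proof explicitly treats the case in which $\{\trans(e_\alpha)^{(j)}\}$ fails to cover $\unit^{(j)}$ and no other component is claimed to sum to a local unit. In that case one must instead choose the $j$-th component of the state so as to kill every $\trans(e_\alpha)$ whose $j$-component differs from that of the target effect, restrict the surviving effects to the systems $\bar\Omega = [N]\setminus(\Omega_k\cup\{j\})$ (at most $r_{k+1}$ effects on systems all of whose measurements have at least $r_{k+1}$ outcomes), and then split again according to whether this restricted set covers a sub-unit effect --- invoking Lemma \ref{mindecomplemma} if it does not and Lemma \ref{cover-unit-lemma} if it does. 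Without this extra layer of case analysis your construction of the contradicting state does not go through.
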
 
\begin{samepage}
\begin{proof}
We begin by considering the action of $\trans$ on a $1$-sub-unit effect $E$. $\trans(E)$ is a multiform effect with a decomposition containing $K^{(1)}_1$ elements, hence by Corollary \ref{mindecompcor} it is a $i$-sub-unit effect for some system $i$ with $K^{(i)}_1 = K^{(1)}_1$. By the same reasoning $\trans$ permutes the set $\mathcal{S}_{\Omega}$, where $\Omega = \{i : K_1^{(i)} = K_1^{(1)}\}$.

We now show iteratively that $\trans$ permutes the sub-unit effects at systems with $r=K^{(i)}_1 > K^{(1)}_1$. Let $\sum_{{\alpha}=1}^{r} e_{\alpha} = \sum_{{\beta}=1}^{s} e'_{\beta}$ be distinct decompositions of an $i$-sub-unit effect $E$ and assume that $\trans$ permutes the set $\mathcal{S}_{\Omega}$, where $\Omega = \{j : K_1^{(j)} < r\}$. Note that $\trans(E)$ is also multiform, since $\trans(E) = \sum_{{\alpha}=1}^{r} \trans(e_{\alpha}) = \sum_{\beta} \trans(e'_{\beta})$. Write $f_{\alpha}=\trans(e_{\alpha})$ and $g=\trans(e'_1)$, noting that $g \notin \{f_{\alpha}\}$.

Assuming that $\{f_{\alpha}\}$ does not cover a sub-unit effect, our aim is to construct a pure product state $s$ that hits $g$ but none of the $f_{\alpha}$, giving a contradiction. Hence $\{\trans(e_{\alpha})\}$ must cover a sub-unit effect. It then follows from Lemma \ref{inversesubunitlemma} that $\trans(E)$ is itself an $i'$-sub-unit effect for some system $i'$ with $ K_1^{(i')} = K_1^{(i)}$. By iteration we can then complete the proof of the lemma. 

To obtain the contradiction mentioned above, suppose firstly that $\{f_{\alpha}^{(i)}\}_{\alpha=1}^r$ covers $\unit^{(i)}$. Then because all measurements on system $i$ have at least $r$ outcomes it must be the case that  $\sum_{\alpha} f_{\alpha}^{(i)} = \unit^{(i)}$, and by Lemma \ref{cover-unit-lemma} there exists a state which hits $g$ but none of the $f_{\alpha}$. 

Suppose instead that $\{f_{\alpha}^{(i)}\}_{\alpha=1}^r$ does not cover $\unit^{(i)}$. Then for every measurement  $x' $ on system $i$ there is an outcome $a'$ such that $X^{(i)}_{a'|x'} \notin \{f_{\alpha}^{(i)}\}$. Let $g^{(i)} = X_{a|x}^{(i)}$ and set $s^{(i)}_{x}=a$. Then for each $x' \neq x$ set $s^{(i)}_{x'} = a'$, so that $s$ cannot hit $f_{\alpha}$  if $f_{\alpha}^{(i)} \neq g^{(i)}$. Let $\bar{\Omega} = [N] \backslash (\Omega \cup \{i\})$, and consider the set $\{f_{\alpha}^{\bar{\Omega}} : f_{\alpha}^{(i)} = g^{(i)}\}$ and the effect $F^{\bar{\Omega}}$ which admits this decomposition. This set has size at most $r$ and by Lemma \ref{diffoneefflemma} does not contain $g^{\bar{\Omega}}$,  as $e_1'$ differs from each member of $\{e_\alpha\}$  on system $i$ (outside $\Omega$), and hence all the $\{f_{\alpha}\}$ must differ from $g$ outside  $\Omega$.

If $\{f_{\alpha}^{\bar{\Omega}} : f_{\alpha}^{(i)} = g^{(i)}\}$  does not cover a sub-unit effect, then by Lemma \ref{mindecomplemma} there exists a pure product state $s^{\bar{\Omega}}$ which hits $g^{\bar{\Omega}}$ but none of the $f_{\alpha}^{\bar{\Omega}}$. Combining $s^{\bar{\Omega}}$ with $s^{(i)}$ and any choice of $s^{\Omega}$ that hits $g^{\Omega}$ gives a pure product state $s$ which hits $g$ but none of the $f_{\alpha}$.

%Lastly, suppose $\{f_{\alpha}^{\bar{\Omega}} : f_{\alpha}^{(i)} = g^{(i)}\}$ does cover a sub-unit effect.  Since this set has size at most $r$, and since every fiducial measurement in $\bar{\Omega}$ has at least $r$ outcomes, it must be that $f_{\alpha}^{(i)} = g^{(i)}$ for all ${\alpha}$ and that $F^{\bar{\Omega}}$ is  an $i'$-sub-unit effect for some system $i' \in \bar{\Omega}$.
%By  choosing a pure state $s^{(i')}$ appropriately, we can  ensure that we hit $f^{(i')}_{\alpha}$ for exactly one value of $\alpha$. Let $\Omega_{i'} = [N]\backslash{i'}$.  As $\{f_{\alpha} \}$ does not cover a sub-unit effect, there must exist some $\alpha'$ for which $f_{\alpha'}^{\Omega_{i'}} \neq g^{\Omega_{i'}} $. It is then straightforward to find a state $s^{\Omega_{i'}} \otimes s^{(i')}$ which hits $g$ but none of $\{f_{\alpha} \}$.

If $\{f_{\alpha}^{\bar{\Omega}} : f_{\alpha}^{(i)} = g^{(i)}\}$ covers an $i'$-sub-unit effect for some $i' \in \bar{\Omega}$, then  it must be the case that $\sum_{\alpha}  f_{\alpha}^{(i')} = \unit^{(i')}$ because all measurements on systems in $\bar{\Omega}$ have at least $r$ outcomes. It follows from Lemma \ref{cover-unit-lemma} that there exists a state which hits $g$ but none of the $f_{\alpha}$.

\end{proof}
\end{samepage}

\end{document}